\newtheorem{theorem}{Theorem}[section]
\newtheorem{lemma}[theorem]{Lemma}
\newtheorem{proposition}[theorem]{Proposition}
\theoremstyle{definition}
\newtheorem{example}[theorem]{Example}
\theoremstyle{remark}
\newtheorem{remark}[theorem]{Remark}
\numberwithin{equation}{section}
\begin{document}
\title[R\'enyi Entropy over Countably Infinite Alphabets]{Some Properties of R\'{e}nyi Entropy over\\Countably Infinite Alphabets}
\author{Mladen Kova\v{c}evi\'{c}}
\address{Department of Electrical Engineering, Faculty of Technical Sciences, 
         University of Novi Sad, Trg Dositeja Obradovi\'{c}a 6, 21000 Novi Sad, Serbia}
\email{kmladen@uns.ac.rs}
\author{Ivan Stanojevi\'{c}}
\address{Department of Electrical Engineering, Faculty of Technical Sciences, 
         University of Novi Sad, Trg Dositeja Obradovi\'{c}a 6, 21000 Novi Sad, Serbia}
\email{cet\_ivan@uns.ac.rs}
\author{Vojin \v{S}enk}
\address{Department of Electrical Engineering, Faculty of Technical Sciences, 
         University of Novi Sad, Trg Dositeja Obradovi\'{c}a 6, 21000 Novi Sad, Serbia}
\email{vojin\_senk@uns.ac.rs}
\thanks{This work was supported by the Ministry of Science and 
        Technological Development of the Republic of Serbia 
        (grants No. TR32040 and III44003).}
\subjclass[2010]{Primary 94A17}
\date{January 30, 2013.}
\keywords{Entropy, R\'{e}nyi entropy, discontinuity of entropy, infinite alphabet.}
\begin{abstract}
In this paper we study certain properties of R\'{e}nyi entropy functionals 
$H_\alpha(\mathcal{P})$ on the space of probability distributions over 
$\mathbb{Z}_+$. Primarily, continuity and convergence issues are addressed. 
Some properties shown parallel those known in the finite alphabet case, while 
others illustrate a quite different behaviour of R\'enyi entropy in the infinite 
case. In particular, it is shown that, for any distribution $\mathcal P$ and any 
$r\in[0,\infty]$, there exists a sequence of distributions $\mathcal{P}_n$ 
converging to $\mathcal{P}$ with respect to the total variation distance, such 
that $\lim_{n\to\infty}\lim_{\alpha\to{1+}} H_\alpha(\mathcal{P}_n) = 
\lim_{\alpha\to{1+}}\lim_{n\to\infty} H_\alpha(\mathcal{P}_n) + r$. 
\end{abstract}
\maketitle
\section{Introduction}
R\'{e}nyi entropies are an important family of functionals defined on the 
space of discrete or continuous probability distributions. They were introduced 
by A.\ R\'{e}nyi \cite{renyi,renyi2} on axiomatic grounds as a generalization of 
Shannon entropy, and have been studied extensively ever since. 
\par For a probability distribution $\mathcal{P}=(p_1,\ldots,p_N)$ 
R\'{e}nyi entropy of order $\alpha$, $\alpha \geq 0$, is defined as 
\begin{equation}
\label{eq_def_finite}
 H_{\alpha}(\mathcal{P})=\frac{1}{1-\alpha}\log \sum_{n=1}^{N}p_{n}^{\alpha}, 
\end{equation}
where it is understood that \cite{renyi} 
\begin{equation}
\label{eq_H1}
 \begin{aligned}
  H_1(\mathcal{P}) &\stackrel{\triangle}{=}\lim_{\alpha\to1}H_\alpha(\mathcal{P})  \\
                   &=-\sum_{n=1}^Np_n\log p_n  \\
                   &=H(\mathcal{P}) 
 \end{aligned}
\end{equation}
which is precisely the Shannon entropy of $\mathcal{P}$. (When $\alpha=0$, 
the convention $0^0=0$ is used. The base of the logarithm in 
\eqref{eq_def_finite}, $b>1$, is arbitrary and will not be specified.) 
Hence, R\'enyi entropy can be thought of as a more fundamental concept 
of which Shannon entropy is an important special case. That this is not 
a mere mathematical generalization has been seen afterwards when R\'enyi 
entropies found applications in many scientific disciplines such as 
information and coding theory \cite{coding,coding2,error,csiszar}, 
statistical physics \cite{stat, stab}, multifractal systems \cite{world}, etc. 
Related concepts of R\'{e}nyi divergence \cite{renyi} (see also, e.g., \cite{erven1,erven2})
and conditional R\'{e}nyi entropy \cite{rate,rategauss}, again appropriate 
generalizations of the corresponding Shannon measures, are also studied 
in a variety of contexts. 
\par We intend here to prove some basic properties of R\'enyi entropies 
of discrete random variables with infinite alphabets. We are partly motivated 
by the recent similar developments in Shannon theory. Namely, in a series of 
papers \cite{ho1,ho2,ho3,ho4} Ho et al.\ have presented an extensive study 
of Shannon information measures over countably infinite alphabets and discussed 
the implications of the results regarding some other well-known information-theoretic 
concepts like error probability, typical sequences, etc. Research on the effects 
of unknown or infinite alphabets on the problems of information theory has always 
been quite active. Source coding is another important example 
(see, e.g., \cite{source1,source2} and the references therein), 
as are Markov chains \cite{kendall} and limit theorems in probability. 
\par R\'enyi entropies are also frequently used when the alphabet is countably 
infinite, e.g., in statistical mechanics where systems with an infinite number 
of particles are often considered \cite{obser}, information sources with an 
infinite number of symbols \cite{inf}, etc., but surprisingly, their basic 
properties over such alphabets are rarely explored in the literature. In most 
textbooks \cite{aczel} and papers, properties of R\'enyi entropies for discrete 
probability distributions are stated and proven only in the finite case. 
However, the behaviour of these functions is in some aspects fundamentally 
different over infinite alphabets and this case needs to be treated separately. 
So, for a probability distribution $\mathcal{P}=(p_1,p_2,\ldots)$ and a real 
parameter $\alpha\geq0$, define \cite{rate}: 
\begin{equation}
\label{eq_def}
  H_\alpha(\mathcal{P})= 
  \begin{cases}
    \frac{1}{1-\alpha}\log\sum\limits_{n=1}^\infty p_n^\alpha\, , & \alpha\neq1  \\
      -\sum\limits_{n=1}^\infty p_n\log p_n\, , & \alpha=1
  \end{cases}
\end{equation}
(Without loss of generality we assume that the alphabet is $\mathbb{Z}_+$ -- the set 
of positive integers.)
We will discuss here mainly the (dis)continuity of $H_\alpha(\mathcal{P})$ with 
respect to $\alpha$ and $\mathcal{P}$. The discontinuity of $H_1(\mathcal{P})$ 
(as well as other Shannon information measures) is thoroughly investigated in 
\cite{ho1}; our findings continue this line of research and give some new insights 
into the general behaviour of information measures. 
\section{Region of convergence}
For any probability distribution with a finite number of probability 
masses, R\'{e}nyi entropy of order $\alpha$ exists for any 
$\alpha \geq 0$. However, in the case of distributions with an 
infinite number of masses the problem of divergence appears. 
Obviously, $H_{\alpha}$ is finite for any $\alpha > 1$ 
because $\sum_{n=1}^{\infty}p_n^{\alpha} < \sum_{n=1}^{\infty}p_n=1$. 
Also, it is easy to see that if $H_{\alpha_0}(\mathcal{P})<\infty$ 
then $H_\alpha(\mathcal{P})<\infty$ for all $\alpha\geq \alpha_0$. 
Call 
\begin{equation}
  \alpha_c(\mathcal{P})=\inf\left\{\alpha\geq0 : H_{\alpha}(\mathcal{P})<\infty\right\}
\end{equation}
the \emph{(R\'{e}nyi's) critical exponent} of the probability 
distribution $\mathcal{P}$. Clearly, $\alpha_c(\mathcal{P})\leq 1$ and 
$H_\alpha (\mathcal{P})=\infty$ for all $\alpha<\alpha_c(\mathcal{P})$. 
It is also interesting to see what happens at $\alpha_c$. 
It turns out that $H_{\alpha_c}(\mathcal{P})$ can converge or 
diverge here, depending on the asymptotics (tail) of the distribution. 
In other words, the \emph{(R\'{e}nyi's) region of convergence} 
of the distribution $\mathcal{P}$, defined by 
\begin{equation}
  \mathcal{R}(\mathcal{P})=\left\{\alpha\geq0 : H_\alpha(\mathcal{P})<\infty\right\}
\end{equation}
is of the form $\mathcal{R}(\mathcal{P})=(\alpha_c(\mathcal{P}),\infty)$ or 
$\mathcal{R}(\mathcal{P})=[\alpha_c(\mathcal{P}),\infty)$. 
Next we give examples of distributions with both kinds of convergence 
regions, for any $\alpha_c\in[0,1]$. In the following, notation 
$x_n\sim y_n$ means $\lim_{n\to\infty}x_n/y_n\in(0,\infty)$.
\begin{example}
  Consider a distribution $\mathcal{P}=(p_1,p_2,\ldots)$ with 
  exponentially decreasing tail $p_n\sim 2^{-n}$. Then for any 
  $\alpha>0$ the sum $\sum_{n=1}^{\infty}2^{-\alpha n}$ converges 
  so that $\alpha_c(\mathcal{P})=0$ and $\mathcal{R}(\mathcal{P})=(0,\infty)$. 
\end{example}
\par Note that any distribution with a finite number of probability 
masses also has $\alpha_c(\mathcal{P})=0$, but the convergence 
region is $\mathcal{R}(\mathcal{P})=[0,\infty)$. 
\begin{example}
  Let $\mathcal{P}=(p_1,p_2,\ldots)$ be a distribution with 
  $p_n\sim n^{-\beta}$, $\beta>1$. Then, since the 
  series $\sum_{n=1}^{\infty}n^{-a}$ converges if and only if $a>1$, it 
  follows that $\sum_{n=1}^\infty p_n^\alpha$ converges if and only 
  if $\alpha\beta>1$. So in this case $\alpha_c(\mathcal{P})=\beta^{-1}$ 
  and the region of convergence is $\mathcal{R}(\mathcal{P})=(\beta^{-1},\infty)$. 
\end{example}
\begin{example}
  Consider a distribution $\mathcal{P}=(p_1,p_2,\ldots)$ with 
  $p_n\sim n^{-\beta}\log^{-2\beta}n$, $\beta>1$. Now for any $\alpha<\beta^{-1}$, 
  $\sum_{n=1}^\infty p_n^\alpha$ diverges because 
  $p_n^\alpha\sim n^{-\alpha\beta}\log^{-2\alpha\beta}n$ is decreasing 
  to zero strictly slower than $n^{-1}$. For $\alpha=\beta^{-1}$ we have 
  $p_n^\alpha\sim n^{-1}\log^{-2}n$ and the corresponding sum converges 
  \cite[Theorem 3.29]{analiza}, as is easily seen from the integral criterion 
  for the convergence of series. 
  So in this case $\alpha_c(\mathcal{P})=\beta^{-1}$ and 
  $\mathcal{R}(\mathcal{P})=[\beta^{-1},\infty)$. 
\end{example}
\par The case $\alpha_c(\mathcal{P})=1$ remains.
\begin{example}
\label{ex_inf_entropy}
  Consider a distribution $\mathcal{P}$ with $p_n\sim n^{-1}\log^{-2}n$. 
  Then $-p_n\log p_n \sim n^{-1}\log^{-1}n$ and therefore (again by the 
  integral criterion) $H(\mathcal{P})=\infty$ so that 
  $\mathcal{R}(\mathcal{P})=(1,\infty)$. ($H(\mathcal{P})=\infty$ implies 
  $H_\alpha(\mathcal{P})=\infty$ for $\alpha<1$ because $-p_n\log p_n$ is 
  bounded from above by $p_n^\alpha$ for all $\alpha<1$ and all sufficiently 
  large $n$.) 
\end{example}
\begin{example}
\label{ex_fin_entropy}
  For the last remaining case, consider $\mathcal{P}$ with 
  $p_n\sim n^{-1}\log^{-3}n$. Now $-p_n\log p_n \sim n^{-1}\log^{-2}n$ which 
  implies $H(\mathcal{P})<\infty$, but $H_\alpha(\mathcal{P})=\infty$ for 
  $\alpha<1$ since $p_n^\alpha$ is bounded from below by $n^{-1}$. We conclude 
  that in this case $\alpha_c(\mathcal{P})=1$ and $\mathcal{R}(\mathcal{P})=[1,\infty)$. 
\end{example}
\par These examples illustrate that the critical exponent of a distribution 
is determined entirely by its asymptotic behaviour. Here is a slightly more precise 
statement.
\begin{lemma}
\label{prop_sim}
  Let $\mathcal{P}=(p_1,p_2,\ldots)$ and $\mathcal{Q}=(q_1,q_2,\ldots)$ be 
  probability distributions. If $p_n\sim q_n$, i.e., if $\lim_{n\to\infty} p_n/q_n\in(0,\infty)$ 
  then $\alpha_c(\mathcal{P})=\alpha_c(\mathcal{Q})$ and 
  $\mathcal{R}(\mathcal{P})=\mathcal{R}(\mathcal{Q})$. 
\end{lemma}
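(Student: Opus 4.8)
The plan is to reduce everything to a term-by-term comparison of the defining series. Since both the region of convergence $\mathcal{R}(\mathcal{P})$ and the critical exponent $\alpha_c(\mathcal{P})=\inf\{\alpha\geq0 : H_\alpha(\mathcal{P})<\infty\}$ are completely determined by the set of orders $\alpha$ at which $H_\alpha$ is finite, it suffices to prove that for every fixed $\alpha\geq0$ one has $H_\alpha(\mathcal{P})<\infty$ if and only if $H_\alpha(\mathcal{Q})<\infty$; the equalities $\mathcal{R}(\mathcal{P})=\mathcal{R}(\mathcal{Q})$ and $\alpha_c(\mathcal{P})=\alpha_c(\mathcal{Q})$ then follow at once. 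To set up, write $L=\lim_{n\to\infty}p_n/q_n\in(0,\infty)$. By the definition of the limit there are constants $0<c_1\leq c_2<\infty$ and an index $N$ with
\begin{equation}
 c_1 q_n \leq p_n \leq c_2 q_n \qquad\text{for all } n\geq N .
\end{equation}
Note also that, because $L\in(0,\infty)$, the tails of $\mathcal{P}$ and $\mathcal{Q}$ are eventually strictly positive, so both distributions have infinite support.

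For $\alpha>1$ there is nothing to prove, since $H_\alpha$ is finite for every distribution. For $0<\alpha<1$ I would raise the displayed inequalities to the power $\alpha$ (monotone on the nonnegative reals), obtaining $c_1^\alpha q_n^\alpha\leq p_n^\alpha\leq c_2^\alpha q_n^\alpha$ for $n\geq N$. The comparison test for series of nonnegative terms then shows that $\sum_n p_n^\alpha$ and $\sum_n q_n^\alpha$ converge or diverge together, the finitely many initial terms being finite and irrelevant to convergence. As the prefactor $1/(1-\alpha)$ is positive and finite for $\alpha<1$, finiteness of $H_\alpha(\mathcal{P})$ is equivalent to $\sum_n p_n^\alpha<\infty$, and similarly for $\mathcal{Q}$, which settles this range.

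The genuinely different case, and the one I expect to be the main obstacle, is $\alpha=1$, where $H_1=-\sum_n p_n\log p_n$ is not a power sum and the crude comparison does not apply directly. Here the idea is to show that the summands are asymptotically proportional, namely $-p_n\log p_n\sim -q_n\log q_n$. Writing $\log p_n=\log q_n+\log(p_n/q_n)$ and dividing gives
\begin{equation}
 \frac{p_n\log p_n}{q_n\log q_n}=\frac{p_n}{q_n}\left(1+\frac{\log(p_n/q_n)}{\log q_n}\right).
\end{equation}
Because $\mathcal{Q}$ is summable, $q_n\to0$ and hence $\log q_n\to-\infty$, while $\log(p_n/q_n)\to\log L$ stays bounded; therefore the bracketed factor tends to $1$ and the ratio tends to $L\in(0,\infty)$. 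For all large $n$ both $-p_n\log p_n$ and $-q_n\log q_n$ are positive, so the comparison test applies once more and the two entropy series converge together.

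Finally, at $\alpha=0$ we have $H_0(\mathcal{P})=\log\#\{n:p_n>0\}$, and since both distributions have infinite support as observed above, $H_0(\mathcal{P})=H_0(\mathcal{Q})=\infty$, so $0$ lies in neither region. Having matched the finiteness of $H_\alpha$ for $\mathcal{P}$ and $\mathcal{Q}$ at every $\alpha\geq0$, the claimed equalities of the convergence regions and of the critical exponents follow immediately.
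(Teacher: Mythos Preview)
Your proof is correct and follows the same idea as the paper's argument---the limit comparison test for nonnegative series---only spelled out in full detail where the paper dismisses the matter in a single sentence invoking ``well-known convergence properties of number series.'' The one place where you add genuine content is the case $\alpha=1$: the paper's sketch mentions only the power sums $\sum p_n^\alpha$ and $\sum q_n^\alpha$, whereas you explicitly verify $-p_n\log p_n\sim -q_n\log q_n$ via the identity $\log p_n=\log q_n+\log(p_n/q_n)$, which is needed to conclude that $1\in\mathcal{R}(\mathcal{P})$ iff $1\in\mathcal{R}(\mathcal{Q})$.
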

\begin{proof}
  The statement follows immediately by considering $\sum_{n=1}^\infty p_n^\alpha$ and 
  $\sum_{n=1}^\infty q_n^\alpha$, and by using the well-known convergence properties 
  of number series \cite{knopp}. 
\end{proof}
\par The following theorem establishes continuity of $H_\alpha(\mathcal{P})$ 
with respect to $\alpha$ and characterizes its behaviour at $\alpha_c$. 
\begin{theorem}
  For any probability distribution $\mathcal{P}$ over $\mathbb{Z}_+$, 
  $H_\alpha(\mathcal{P})$ is a continuous function 
  in $\alpha$ in its region of convergence. Furthermore, if 
  $\alpha_c(\mathcal{P})$ is the critical exponent of $\mathcal{P}$ 
  and $H_{\alpha_c}(\mathcal{P})=\infty$ then 
  $\lim_{\alpha\to\alpha_c+} H_\alpha(\mathcal{P}) = \infty$. 
\end{theorem}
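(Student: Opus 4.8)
The plan is to reduce everything to the single function $S(\alpha)=\sum_{n=1}^\infty p_n^\alpha$, since $H_\alpha(\mathcal P)=\frac{\log S(\alpha)}{1-\alpha}$ for $\alpha\neq1$ and $S(\alpha)\ge p_1^\alpha>0$ throughout. The first step is to establish that $S$ is continuous on $\mathcal R(\mathcal P)$. The key observation is the monotonicity $p_n^\alpha\le p_n^{\alpha_1}$ valid for every $\alpha\ge\alpha_1$ (because $p_n\le1$), so that for any fixed $\alpha_1\in\mathcal R(\mathcal P)$ the summable sequence $p_n^{\alpha_1}$, with $\sum_n p_n^{\alpha_1}=S(\alpha_1)<\infty$, dominates $\sum_n p_n^\alpha$ uniformly on $[\alpha_1,\infty)$. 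The Weierstrass $M$-test then gives uniform convergence of a series of continuous functions, hence continuity of $S$ on $[\alpha_1,\infty)$; since every point of $\mathcal R(\mathcal P)$ lies in such an interval, $S$ --- and therefore $H_\alpha$ at every point of $\mathcal R(\mathcal P)\setminus\{1\}$ --- is continuous.

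The only delicate point of the continuity claim is the value $\alpha=1$, and only when $1\in\mathcal R(\mathcal P)$. Here $S(1)=\sum_n p_n=1$, so $\log S(\alpha)\to0$ and the quotient defining $H_\alpha$ is of the indeterminate form $0/0$ as $\alpha\to1$; I would resolve it by l'H\^opital's rule, which reduces $\lim_{\alpha\to1}H_\alpha$ to $-S'(1)/S(1)=-\sum_n p_n\log p_n=H_1(\mathcal P)$. The step needing justification is differentiating $S$ term by term, $S'(\alpha)=\sum_n p_n^\alpha\log p_n$, near $\alpha=1$, and I expect this to be the main obstacle of the first half. I would handle it by exhibiting a summable dominating sequence for the differentiated series: if $\alpha_c(\mathcal P)<1$, fix $\alpha_0\in(\alpha_c,1)$ and use $p_n^{\alpha}\abs{\log p_n}\le C\,p_n^{\alpha_0}$ for $\alpha$ in a neighbourhood of $1$ (since $p_n^{\alpha-\alpha_0}\abs{\log p_n}$ is bounded), which is summable because $\alpha_0\in\mathcal R(\mathcal P)$; if instead $\alpha_c(\mathcal P)=1$ with $1\in\mathcal R(\mathcal P)$, only the right limit is needed and the bound $p_n^{\alpha}\abs{\log p_n}\le p_n\abs{\log p_n}$ for $\alpha\ge1$ suffices, its sum being exactly $H_1(\mathcal P)<\infty$.

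For the second assertion I would split according to whether $\alpha_c$ equals $1$. If $\alpha_c(\mathcal P)<1$, then $H_{\alpha_c}=\infty$ forces $S(\alpha_c)=\infty$, the factor $\tfrac1{1-\alpha_c}$ being finite and positive. Since $S$ is non-increasing in $\alpha$, it suffices to show $S(\alpha)\to\infty$ as $\alpha\downarrow\alpha_c$: given any $M$, choose $N$ with $\sum_{n\le N}p_n^{\alpha_c}>M$, and by continuity of the finite sum $\sum_{n\le N}p_n^\alpha$ this persists for $\alpha$ near $\alpha_c$, whence $S(\alpha)\ge\sum_{n\le N}p_n^\alpha>M$. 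Then $H_\alpha=\frac{\log S(\alpha)}{1-\alpha}\to\infty$ because $\tfrac1{1-\alpha}\to\tfrac1{1-\alpha_c}>0$.

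The remaining case $\alpha_c(\mathcal P)=1$ is the genuinely different one, since then $S(1)=1$ is finite and the divergence of $H_1$ lives entirely in the derivative. Writing $g(\alpha)=-\log S(\alpha)$, which is continuous on $[1,\infty)$ with $g(1)=0$ and differentiable on $(1,\infty)$ (term-by-term differentiation being justified there exactly as above), the mean value theorem gives $H_\alpha=\frac{g(\alpha)-g(1)}{\alpha-1}=g'(\xi)=\frac{\sum_n p_n^\xi\abs{\log p_n}}{S(\xi)}$ for some $\xi\in(1,\alpha)$. As $\xi\downarrow1$ the denominator tends to $S(1)=1$, while by monotone convergence the numerator increases to $\sum_n p_n\abs{\log p_n}=H_1(\mathcal P)=\infty$; hence $g'(\xi)\to\infty$, and since every $\xi$ produced by the mean value theorem satisfies $\xi<\alpha$, letting $\alpha\downarrow1$ forces $H_\alpha\to\infty$. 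The recurring subtlety throughout is the exchange of limit, derivative, and summation at the boundary $\alpha_c$, which is precisely what the dominated-series and monotone-convergence arguments above are designed to control.
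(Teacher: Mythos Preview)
Your proof is correct and, on the continuity of $S(\alpha)=\sum_n p_n^\alpha$ and the vertical asymptote when $\alpha_c<1$, is essentially identical to the paper's: both invoke the Weierstrass $M$-test via the domination $p_n^\alpha\le p_n^{\alpha_0}$, and both push a finite partial sum $\sum_{n\le N}p_n^\alpha$ down to $\alpha_c$ to force $S(\alpha)\to\infty$. The treatment of $\alpha=1$ with $H_1(\mathcal P)<\infty$ is also the same l'H\^opital argument as in the paper's Section~IV (the paper defers this case there), though your domination $p_n^{\alpha}\abs{\log p_n}\le C\,p_n^{\alpha_0}$ is slightly more explicit about covering the two-sided limit when $\alpha_c<1$. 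The genuinely different step is the case $\alpha_c=1$ with $H_1(\mathcal P)=\infty$: the paper constructs truncations $\mathcal Q_n=(p_1,\dots,p_{n-1},\sum_{i\ge n}p_i)$, uses superadditivity of $x^\alpha$ for $\alpha>1$ to get $H_\alpha(\mathcal Q_n)\le H_\alpha(\mathcal P)$, and then appeals to lower-semicontinuity of Shannon entropy (cited from the literature) to conclude $H(\mathcal Q_n)\to\infty$. Your mean-value-theorem argument, expressing $H_\alpha=g'(\xi)$ and letting the numerator $\sum_n p_n^{\xi}\abs{\log p_n}$ increase to $\infty$ by monotone convergence as $\xi\downarrow1$, is more self-contained and avoids any external input; it also makes transparent that the divergence is carried by the derivative of $S$ rather than by $S$ itself. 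The paper's route, on the other hand, ties this boundary behaviour to the later lower-semicontinuity theme, so each approach has its own expository payoff.
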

\begin{proof}
The case $\alpha_c=1$, and continuity at the point $\alpha=1$ for arbitrary 
$\alpha_c$ will be analyzed in Section IV. Assume $\alpha_c<1$, and note 
that in $(\alpha_c,1)\cup(1,\infty)$ it is enough to consider the function 
$\sum_{n=1}^\infty p_n^\alpha$. Since 
all summands are continuous functions in $\alpha$, their sum will 
also be continuous if it converges uniformly \cite[Theorem 7.11]{analiza}, 
so let us check that it does. Assume first that 
$\sum_{n=1}^\infty p_n^{\alpha_c}<\infty$. 
For all $\alpha\geq\alpha_c$, $p_n^\alpha \leq p_n^{\alpha_c}$. 
By Weierstrass' criterion \cite[Theorem 7.10]{analiza} for the uniform 
convergence of functional series, 
these are precisely the sufficient conditions for the 
uniform convergence of $\sum_{n=1}^\infty p_n^\alpha$ on $[\alpha_c,\infty)$ 
and therefore this is a continuous function. 
If $\sum_{n=1}^\infty p_n^{\alpha_c}=\infty$ then one can apply the 
same reasoning with any $\alpha_0>\alpha_c$ instead of $\alpha_c$, 
to establish continuity in $\mathcal{R}(\mathcal{P})$. 
In this case it is left to prove that $H_\alpha(\mathcal{P})$ has 
a vertical asymptote at the critical exponent. This is straightforward, since 
$\lim_{\alpha\to\alpha_c+}\sum_{n=1}^\infty p_n^\alpha 
   \geq \lim_{\alpha\to\alpha_c+}\sum_{n=1}^N p_n^\alpha 
   =    \sum_{n=1}^N p_n^{\alpha_c}$, and the last term can be made arbitrarily 
large by letting $N\to\infty$. 
\end{proof}
By the same arguments as in the finite case \cite{renyi2}, it can be shown 
that $H_\alpha(\mathcal{P})$ is monotonically decreasing in $\alpha$ in its 
region of convergence, unless $\mathcal{P}$ is a uniform distribution in which 
case $H_\alpha(\mathcal{P})$ is constant with respect to $\alpha$. 
\par Let us introduce one more concept related to the R\'{e}nyi's 
convergence region of a distribution. Let $\Gamma$ denote the set 
of all probability distributions over $\mathbb{Z}_+$, 
i.e., $\Gamma=\big\{(p_1,p_2,\ldots): p_n\geq0, \sum_{n=1}^{\infty}p_n=1\big\}$ 
and let $\Gamma(\alpha_c)$ be the set of all distributions with 
critical exponent $\alpha_c$. 
\begin{remark}
 In the following, we will discuss some notions for which a topology on the 
space of all distributions $\Gamma$ is needed. The topology understood in 
this paper is the one induced by the total variation (or variational) distance 
\begin{equation}
 d_{\textsc{tv}}(\mathcal{P},\mathcal{Q}) ={\lVert\mathcal{P}-\mathcal{Q}\rVert}_1  
                              =\sum_{n=1}^\infty \left|p_n-q_n\right|
\end{equation}
where ${\lVert\cdotp\rVert}_1$ is the familiar $\ell^1$ norm. 
\end{remark}
\begin{proposition}
\label{prop_dense}
 $\Gamma(\alpha_c)$ is dense in $\Gamma$, for any $\alpha_c\in [0,1]$. 
In other words, $\Gamma$ is the closure of $\Gamma(\alpha_c)$, 
$\Gamma=\overline{\Gamma(\alpha_c)}$.
\end{proposition}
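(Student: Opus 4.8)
The plan is to show that every $\mathcal{P}\in\Gamma$ can be approximated arbitrarily well in total variation by a distribution whose critical exponent is \emph{exactly} $\alpha_c$. The guiding principle, already visible in Lemma~\ref{prop_sim} and the examples of this section, is that $\alpha_c$ is governed solely by the tail of a distribution. Accordingly, I would fix the bulk of $\mathcal{P}$ and graft onto it a tail of the prescribed type. Concretely, for each target $\alpha_c\in[0,1]$ I would first record a reference distribution $\mathcal{T}=(t_1,t_2,\ldots)$ with $\alpha_c(\mathcal{T})=\alpha_c$: e.g.\ $t_n\sim 2^{-n}$ when $\alpha_c=0$, $t_n\sim n^{-1/\alpha_c}$ when $\alpha_c\in(0,1)$, and $t_n\sim n^{-1}\log^{-3}n$ when $\alpha_c=1$ (Example~\ref{ex_fin_entropy}). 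Each of these converges as a probability mass (after normalization) yet has the stated critical exponent.

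Given $\mathcal{P}$ and $\varepsilon>0$, I would proceed as follows. Choose $N$ so large that the tail mass $m:=\sum_{n>N}p_n$ satisfies $m<\varepsilon/4$; since $\sum_n p_n=1$ this also guarantees some index $k\le N$ with $p_k>0$. Put $\delta=\min\{\varepsilon/4,\,p_k\}>0$. Now define $\mathcal{Q}=(q_1,q_2,\ldots)$ by keeping $q_n=p_n$ for $n\le N$, except $q_k=p_k-\delta$, and placing on the tail a rescaled copy of $\mathcal{T}$, namely $q_{N+j}=(m+\delta)\,t_j$ for $j\ge 1$. The mass $m+\delta$ removed from the head is exactly restored on the tail, so $\mathcal{Q}\in\Gamma$, and a direct estimate gives $d_{\textsc{tv}}(\mathcal{P},\mathcal{Q})\le 2\delta+2m<\varepsilon$.

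It then remains to verify that $\alpha_c(\mathcal{Q})=\alpha_c$. The head of $\mathcal{Q}$ consists of finitely many nonnegative terms and contributes a finite amount to $\sum_n q_n^\alpha$ for every $\alpha\ge 0$, so convergence of this series is decided entirely by the tail. There $q_{N+j}^\alpha=(m+\delta)^\alpha\,t_j^\alpha$ with $m+\delta>0$, whence $\sum_j q_{N+j}^\alpha$ converges precisely when $\sum_j t_j^\alpha$ does; equivalently $q_{N+j}\sim t_j$, so Lemma~\ref{prop_sim} applies to the tail. Consequently $\mathcal{R}(\mathcal{Q})=\mathcal{R}(\mathcal{T})$ and $\alpha_c(\mathcal{Q})=\alpha_c(\mathcal{T})=\alpha_c$, so that $\mathcal{Q}\in\Gamma(\alpha_c)$ and lies within $\varepsilon$ of $\mathcal{P}$; this proves $\Gamma=\overline{\Gamma(\alpha_c)}$.

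I expect the only genuine subtlety to be accommodating \emph{all} $\mathcal{P}$ with a single construction. A pure mixture $(1-\lambda)\mathcal{P}+\lambda\mathcal{T}$ would not suffice, since one checks that its critical exponent equals $\max\{\alpha_c(\mathcal{P}),\alpha_c\}$ and so can never be lowered below $\alpha_c(\mathcal{P})$; this fails exactly for heavy-tailed $\mathcal{P}$ with $\alpha_c(\mathcal{P})>\alpha_c$. Truncating the tail at level $N$ handles that case by discarding the offending mass, while reserving the strictly positive mass $\delta$ from the head handles the opposite extreme of finitely supported $\mathcal{P}$, where there is no tail mass to redistribute. Fusing both devices into the one construction above is the point that requires a little care.
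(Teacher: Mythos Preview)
Your proof is correct and follows essentially the same strategy as the paper: keep the head of $\mathcal{P}$ and replace its tail by (a rescaled copy of) a reference distribution with the prescribed critical exponent, then invoke the tail-determines-$\alpha_c$ principle. The paper treats the infinite-support and finite-support cases separately, while you merge them into a single construction by always reserving the strictly positive mass~$\delta$ from the head; this is a minor streamlining, not a different idea.
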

\begin{proof}
Let $\mathcal{P}=(p_1,p_2,\ldots)$ be an arbitrary distribution in 
$\Gamma$. We need to show that in any neighborhood of $\mathcal P$ 
there exist distributions with critical exponent $\alpha_c$. Assume 
first that $\mathcal{P}$ has an infinite support (i.e., infinitely 
many nonzero masses), and let $n_0$ be such that 
\begin{equation}
 \sum_{n=n_0}^\infty p_n \leq \frac{\epsilon}{2} , 
\end{equation}
where $\epsilon>0$ is an arbitrary small number. 
Let $\mathcal{Q}\in\Gamma(\alpha_c)$ be a distribution with an infinite 
support and critical exponent $\alpha_c$. Take 
$(q_{n_0},q_{n_0+1},\ldots)$ and multiply it by a suitable constant 
to get $(q_{n_0}',q_{n_0+1}',\ldots)$ such that  
\begin{equation}
\label{eq_tail}
 \sum_{n=n_0}^\infty q_n'=\sum_{n=n_0}^\infty p_n.
\end{equation}
Now let $\mathcal{S}=(s_1,s_2,\ldots)$ be a distribution defined by 
\begin{equation}
 \mathcal{S}=(p_1,\ldots,p_{n_0-1},q_{n_0}',q_{n_0+1}',\ldots).
\end{equation}
Clearly $\mathcal{S}\in\Gamma(\alpha_c)$, because $s_n\sim q_n$. 
Furthermore, 
\begin{equation}
 {\lVert\mathcal{P}-\mathcal{S}\rVert}_1 = \sum_{n=1_{\ }}^\infty \left|p_n-s_n\right|  
                                         \leq \sum_{n=n_0}^\infty \left(|p_n|+|s_n|\right)  
                                         = \sum_{n=n_0}^\infty \left(p_n + q_n'\right)  
                                          \leq \epsilon.
\end{equation}
Therefore, in the $\epsilon$-neighborhood of $\mathcal{P}$ we have found 
a member of $\Gamma(\alpha_c)$. Essentially, this completes the proof 
of the claim, but when $\mathcal{P}$ has finite support the proof has 
to be slightly modified (in that case $\mathcal{P}$ has no tail and 
\eqref{eq_tail} fails). So let $\mathcal{P}=(p_1,\ldots,p_N)$ be 
a distribution with finitely many probability masses and $\epsilon>0$ 
an arbitrary small number. Let $\mathcal{Q}\in\Gamma(\alpha_c)$ be a 
distribution with an infinite support and a critical 
exponent $\alpha_c$. Take $(q_{n_0},q_{n_0+1},\ldots)$, $n_0>N$, such that 
\begin{equation}
 \sum_{n=n_0}^\infty q_n \leq \frac{\epsilon}{2}.
\end{equation}
Now create another distribution $\mathcal{S}=(s_1,s_2,\ldots)$ as 
\begin{equation}
 s_n = 
 \begin{cases}
  p_n-\delta_n, & 1 \leq n \leq N  \\
  0,            & N < n < n_0  \\
  q_n,          & n \geq n_0
 \end{cases}
\end{equation}
where $\delta_n$ are such that $p_n-\delta_n\geq0$ and 
$\sum_{n=1}^N \delta_n = \sum_{n=n_0}^\infty q_n$. 
Again, $\mathcal{S}\in\Gamma(\alpha_c)$ because $s_n\sim q_n$. Furthermore, 
\begin{equation}
 {\lVert\mathcal{P}-\mathcal{S}\rVert}_1 = \sum_{n=1}^\infty \left|p_n-s_n\right|  
   = \sum_{n=1}^N \delta_n + \sum_{n=n_0}^\infty q_n  
   \leq \epsilon.
\end{equation}
Therefore, in the $\epsilon$-neighborhood of $\mathcal{P}$ we have found 
a member of $\Gamma(\alpha_c)$. The proof is now complete.
\end{proof}
\begin{proposition}
 $\Gamma(\alpha_c)$ is convex in $\Gamma$, for any $\alpha_c\in [0,1]$. 
\end{proposition}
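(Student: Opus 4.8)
The plan is to fix two distributions $\mathcal{P},\mathcal{Q}\in\Gamma(\alpha_c)$ and a weight $\lambda\in[0,1]$, set $\mathcal{R}=\lambda\mathcal{P}+(1-\lambda)\mathcal{Q}$ with masses $r_n=\lambda p_n+(1-\lambda)q_n$, and show that $\mathcal{R}\in\Gamma(\alpha_c)$. First I would note that $\mathcal{R}\in\Gamma$ (its masses are nonnegative and sum to $1$) and dispose of the trivial cases $\lambda\in\{0,1\}$, so that henceforth $\lambda\in(0,1)$. Since the critical exponent is the infimum of the set of orders for which $\sum_n r_n^\alpha$ converges, it suffices to prove that this series diverges for every $\alpha<\alpha_c$ and converges for every $\alpha>\alpha_c$; these two facts together force $\alpha_c(\mathcal{R})=\alpha_c$.

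For the divergence I would use the pointwise bound $r_n\ge\lambda p_n$, valid since $(1-\lambda)q_n\ge0$. As $x\mapsto x^\alpha$ is nondecreasing, $r_n^\alpha\ge\lambda^\alpha p_n^\alpha$ for every $\alpha\ge0$, whence $\sum_n r_n^\alpha\ge\lambda^\alpha\sum_n p_n^\alpha$. For $\alpha<\alpha_c=\alpha_c(\mathcal{P})$ the right-hand series diverges and $\lambda^\alpha>0$, so $\sum_n r_n^\alpha=\infty$, giving $\alpha_c(\mathcal{R})\ge\alpha_c$.

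For the convergence I would invoke the subadditivity of $x\mapsto x^\alpha$ on $[0,\infty)$ for $0\le\alpha\le1$, namely $(a+b)^\alpha\le a^\alpha+b^\alpha$, applied to $a=\lambda p_n$ and $b=(1-\lambda)q_n$; combined with $\lambda^\alpha,(1-\lambda)^\alpha\le1$ this yields $r_n^\alpha\le p_n^\alpha+q_n^\alpha$ and hence $\sum_n r_n^\alpha\le\sum_n p_n^\alpha+\sum_n q_n^\alpha$. Since $\alpha_c\le1$, for any $\alpha\in(\alpha_c,1]$ both sums on the right are finite, so $\sum_n r_n^\alpha<\infty$; and for $\alpha>1$ convergence is automatic because $\mathcal{R}$ is a probability distribution, as noted at the beginning of this section. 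Thus $\alpha_c(\mathcal{R})\le\alpha_c$, which together with the previous paragraph gives $\alpha_c(\mathcal{R})=\alpha_c$.

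I do not anticipate a serious obstacle; the argument is a routine pair of comparison estimates. The only points requiring care are selecting the correct inequality for $x\mapsto x^\alpha$ in the two regimes $\alpha\le1$ and $\alpha>1$ (subadditivity versus automatic convergence), and the boundary order $\alpha=0$, where the convention $0^0=0$ must be respected: there $r_n^0$ is the indicator of $r_n>0$, the support of $\mathcal{R}$ contains that of $\mathcal{P}$, and the bound $r_n^0\ge p_n^0$ persists, so the divergence step carries over unchanged.
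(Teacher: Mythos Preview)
Your proof is correct and follows essentially the same route as the paper: sandwich $\sum_n r_n^\alpha$ between quantities comparable to $\sum_n p_n^\alpha$ and $\sum_n q_n^\alpha$ via elementary inequalities for $x\mapsto x^\alpha$, then read off that the mixture inherits the common critical exponent. The only cosmetic difference is that for the lower bound the paper invokes concavity of $x^\alpha$ (yielding $\sum_n t_n^\alpha\ge\lambda\sum_n p_n^\alpha+(1-\lambda)\sum_n q_n^\alpha$) whereas you use the simpler monotonicity bound $r_n\ge\lambda p_n$; both give the needed divergence for $\alpha<\alpha_c$.
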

\begin{proof}
 We need to show that $\mathcal{P},\mathcal{Q}\in\Gamma(\alpha_c)$ implies 
that $\mathcal{T}=\lambda\mathcal{P}+(1-\lambda)\mathcal{Q}\in\Gamma(\alpha_c)$ 
for any $\lambda\in(0,1)$. This is not hard: asymptotic behaviour of 
$\mathcal{T}=\lambda\mathcal{P}+(1-\lambda)\mathcal{Q}$ is determined by 
$\mathcal{P}$ or $\mathcal{Q}$, whichever has heavier tail, so the critical 
exponent is unchanged. More precisely, let $t_n=\lambda p_n + (1-\lambda)q_n$ 
and observe $\sum_{n=1}^\infty t_n^\alpha$. From the fact that $x^\alpha$ is 
a concave and subadditive function for $\alpha<1$ one concludes that 
$\sum_{n=1}^\infty t_n^\alpha$ is also a concave and subadditive function 
in $\mathcal{T}$, i.e.: 
\begin{equation}
 \lambda\sum_{n=1}^\infty p_n^\alpha + (1-\lambda)\sum_{n=1}^\infty q_n^\alpha \;\leq\; \sum_{n=1}^\infty t_n^\alpha 
     \;\leq\; \lambda^\alpha\sum_{n=1}^\infty p_n^\alpha + (1-\lambda)^\alpha\sum_{n=1}^\infty q_n^\alpha. 
\end{equation}
Now it follows that $\sum_{n=1}^\infty t_n^\alpha$ converges if and only if 
$\sum_{n=1}^\infty p_n^\alpha$ and $\sum_{n=1}^\infty q_n^\alpha$ both converge, 
and therefore if $\mathcal{P}$ and $\mathcal{Q}$ have critical exponent $\alpha_c$, 
then so does $\mathcal{T}$. 
\end{proof}
A more general statement can easily be proven by the same methods: For any $\mathcal{P}, 
\mathcal{Q}$, and $\lambda\in(0,1)$, we have that 
$\alpha_c(\mathcal{T})=\max\{\alpha_c(\mathcal{P}),\alpha_c(\mathcal{Q})\}$ and 
$\mathcal{R}_c(\mathcal{T})=\mathcal{R}_c(\mathcal{P})\cap\mathcal{R}_c(\mathcal{Q})$, 
where $\mathcal{T}=\lambda\mathcal{P}+(1-\lambda)\mathcal{Q}$. Still more generally: 
\begin{theorem}
 Let $\mathcal{P}_1,\ldots,\mathcal{P}_K$ be probability distributions over $\mathbb{Z}_+$, 
 and let $\mathcal{T}=\sum_{k=1}^{K}\lambda_k\mathcal{P}_k$ be their mixture, where $\lambda_k>0$ 
 and $\sum_{k=1}^K \lambda_k = 1$. Then $\alpha_c(\mathcal{T})=\max_{1\leq k\leq K}\{\alpha_c(\mathcal{P}_k)\}$ 
 and $\mathcal{R}_c(\mathcal{T})=\bigcap_{k=1}^K \mathcal{R}_c(\mathcal{P}_k)$. 
\end{theorem}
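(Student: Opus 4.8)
The plan is to follow the route used for $K=2$ and reduce the entire statement to the convergence behaviour of the single series $\sum_{n=1}^\infty t_n^\alpha$, where $t_n=\sum_{k=1}^K\lambda_k p_{k,n}$ and $p_{k,n}$ denotes the $n$-th mass of $\mathcal{P}_k$. For $0\le\alpha<1$ the map $x\mapsto x^\alpha$ is concave and subadditive on $[0,\infty)$, so evaluating it at $t_n=\sum_k\lambda_k p_{k,n}$ and summing over $n$ yields the two-sided estimate
\begin{equation}
 \sum_{k=1}^K \lambda_k \sum_{n=1}^\infty p_{k,n}^\alpha \;\leq\; \sum_{n=1}^\infty t_n^\alpha \;\leq\; \sum_{k=1}^K \lambda_k^\alpha \sum_{n=1}^\infty p_{k,n}^\alpha ,
\end{equation}
where the left inequality comes from concavity (Jensen) and the right one from subadditivity. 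This is the exact analogue, for $K$ summands, of the chain already used for two distributions.

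From this estimate the assertion for $\alpha\neq1$ follows at once. If every $\sum_n p_{k,n}^\alpha$ converges, the right-hand bound, being a finite sum of finite terms, forces $\sum_n t_n^\alpha<\infty$; conversely, since all summands are nonnegative, $\sum_n t_n^\alpha<\infty$ together with the left-hand bound forces each $\lambda_k\sum_n p_{k,n}^\alpha$, and hence each $\sum_n p_{k,n}^\alpha$, to be finite. Thus for $\alpha\neq1$ one has $\alpha\in\mathcal{R}_c(\mathcal{T})$ if and only if $\alpha\in\mathcal{R}_c(\mathcal{P}_k)$ for every $k$. For $\alpha>1$ this is automatic on both sides, while for $\alpha<1$ it is the content of the estimate; taking the infimum of the resulting convergence set then gives $\alpha_c(\mathcal{T})=\max_k\alpha_c(\mathcal{P}_k)$, independently of whether the endpoint is attained.

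The one value the $x^\alpha$ argument does not reach, and which I expect to be the only genuine obstacle, is the boundary $\alpha=1$: there $H_\alpha$ is governed by the Shannon entropy $H(\mathcal{T})=-\sum_n t_n\log t_n$ rather than by $\sum_n t_n^\alpha$. For this case I would invoke the standard mixture bounds for Shannon entropy,
\begin{equation}
 \sum_{k=1}^K \lambda_k H(\mathcal{P}_k) \;\leq\; H(\mathcal{T}) \;\leq\; \sum_{k=1}^K \lambda_k H(\mathcal{P}_k) + H(\lambda_1,\ldots,\lambda_K),
\end{equation}
where the mixing-distribution entropy obeys $H(\lambda_1,\ldots,\lambda_K)\leq\log K<\infty$. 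Since each $H(\mathcal{P}_k)\geq0$, the lower bound yields $H(\mathcal{T})=\infty$ as soon as some $H(\mathcal{P}_k)=\infty$, while the upper bound yields $H(\mathcal{T})<\infty$ whenever all $H(\mathcal{P}_k)<\infty$; hence $1\in\mathcal{R}_c(\mathcal{T})$ iff $1\in\mathcal{R}_c(\mathcal{P}_k)$ for all $k$, in agreement with the $\alpha\neq1$ regime. Combining the two regimes gives $\mathcal{R}_c(\mathcal{T})=\bigcap_{k=1}^K\mathcal{R}_c(\mathcal{P}_k)$ and completes the identification of $\alpha_c(\mathcal{T})$.

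As an alternative, the whole statement can be obtained by induction on $K$ from the already-recorded $K=2$ case, writing $\mathcal{T}=\lambda_1\mathcal{P}_1+(1-\lambda_1)\mathcal{T}'$ with $\mathcal{T}'=\sum_{k\geq2}\tfrac{\lambda_k}{1-\lambda_1}\mathcal{P}_k$ and applying the inductive hypothesis to $\mathcal{T}'$. I prefer the direct estimate above, since it exhibits explicitly why $\alpha=1$ must be argued separately from the remaining values of $\alpha$.
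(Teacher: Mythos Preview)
Your proof is correct and follows the route the paper itself indicates: the paper does not give a separate argument for this theorem but says it ``can easily be proven by the same methods'' as the $K=2$ proposition, namely the concavity/subadditivity two-sided estimate for $x\mapsto x^\alpha$ with $\alpha<1$. Your displayed inequality is precisely the $K$-term version of that estimate, and your deduction of $\alpha_c(\mathcal{T})=\max_k\alpha_c(\mathcal{P}_k)$ from it matches the paper's reasoning.

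Where you go beyond the paper is in treating the boundary value $\alpha=1$ separately via the Shannon-entropy mixture bounds. The paper's sketch, taken literally, only controls $\sum_n t_n^\alpha$ for $\alpha<1$, which suffices to pin down $\alpha_c(\mathcal{T})$ but does not by itself decide whether $1\in\mathcal{R}_c(\mathcal{T})$ when $\alpha_c(\mathcal{T})=1$; your use of $\sum_k\lambda_kH(\mathcal{P}_k)\le H(\mathcal{T})\le\sum_k\lambda_kH(\mathcal{P}_k)+H(\lambda_1,\ldots,\lambda_K)$ closes that gap cleanly. So your argument is the same in spirit but strictly more complete than what the paper records. The inductive alternative you mention is also valid and is arguably what the paper has in mind by ``the same methods.''
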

\section{Continuity properties of R\'{e}nyi entropy}
As for the continuity in the argument $\mathcal{P}$, it turns out 
that R\'{e}nyi entropy behaves differently when $\alpha>1$ and when 
$\alpha\leq 1$, unlike its behaviour in the case of finite alphabets.
\begin{theorem}
\label{th_cont}
 The R\'{e}nyi entropy $H_\alpha(\mathcal{P})$ is a continuous function 
in $\mathcal{P}$ for $\alpha>1$ and discontinuous for $\alpha\leq 1$.
\end{theorem}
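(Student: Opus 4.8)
The plan is to treat the two regimes $\alpha>1$ and $\alpha\le1$ separately, since the underlying mechanisms are entirely different: in the first the entropy is genuinely $\ell^1$-Lipschitz, while in the second it is destroyed by a vanishing redistribution of mass.

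For $\alpha>1$ I would reduce continuity of $H_\alpha$ to continuity of the map $\mathcal{P}\mapsto S_\alpha(\mathcal{P}):=\sum_{n=1}^\infty p_n^\alpha$. This reduction is legitimate because $S_\alpha(\mathcal{P})\in(0,1]$ for every distribution, and both $\log$ and the factor $\tfrac{1}{1-\alpha}$ are continuous on $(0,\infty)$. The key estimate is the elementary inequality $\abs{x^\alpha-y^\alpha}\le\alpha\abs{x-y}$ for $x,y\in[0,1]$ and $\alpha>1$, which follows from the mean value theorem: $x^\alpha-y^\alpha=\alpha\xi^{\alpha-1}(x-y)$ for some $\xi$ between $x$ and $y$, and $\xi^{\alpha-1}\le1$ since $\xi\le1$ and $\alpha-1>0$. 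Summing over $n$ yields the Lipschitz bound
\[ \abs{S_\alpha(\mathcal{P})-S_\alpha(\mathcal{Q})}\le\sum_{n=1}^\infty\abs{p_n^\alpha-q_n^\alpha}\le\alpha\sum_{n=1}^\infty\abs{p_n-q_n}=\alpha\,\norm{\mathcal{P}-\mathcal{Q}}{1}, \]
so $S_\alpha$ is Lipschitz in the total variation distance and continuity of $H_\alpha$ follows immediately.

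For $\alpha\le1$ I would instead exhibit a single base point together with a sequence converging to it in total variation along which $H_\alpha$ fails to converge, which suffices to establish discontinuity. The mechanism is that a vanishingly small mass can be spread over a large block of distant atoms, a perturbation that is small in the $\ell^1$ norm but drastically inflates $\sum_n p_n^\alpha$ when $\alpha<1$ (and the Shannon sum when $\alpha=1$). Concretely, take the point mass $\mathcal{P}=(1,0,0,\ldots)$, for which $H_\alpha(\mathcal{P})=0$, and define $\mathcal{P}_k$ by keeping mass $1-\delta_k$ at the first coordinate and spreading the remaining mass $\delta_k$ uniformly over $M_k$ far-away atoms, each of size $\delta_k/M_k$. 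Then $\norm{\mathcal{P}-\mathcal{P}_k}{1}=2\delta_k\to0$ as soon as $\delta_k\to0$, while for $\alpha<1$ the block contributes $M_k(\delta_k/M_k)^\alpha=M_k^{1-\alpha}\delta_k^\alpha$ to $S_\alpha(\mathcal{P}_k)$. Since $1-\alpha>0$, one can choose $M_k$ growing fast enough (with $\delta_k=1/k$, say $M_k=\lceil k^{\alpha/(1-\alpha)}g(k)\rceil$ where $g(k)\to\infty$) so that this contribution, and hence $H_\alpha(\mathcal{P}_k)=\tfrac{1}{1-\alpha}\log S_\alpha(\mathcal{P}_k)$, diverges to $+\infty\ne0$.

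For the remaining case $\alpha=1$ the same perturbation is used, but the divergence is now governed by the Shannon contribution of the block, $-M_k(\delta_k/M_k)\log(\delta_k/M_k)=\delta_k\log M_k-\delta_k\log\delta_k$; taking $\delta_k\to0$ with $M_k$ large enough that $\delta_k\log M_k\to\infty$ again forces $H_1(\mathcal{P}_k)\to\infty$. I expect the main obstacle to be the bookkeeping in this discontinuity construction: one must calibrate the pair $(\delta_k,M_k)$ so that the $\ell^1$ perturbation vanishes while the entropy contribution of the block explodes, and check that the single scaling idea indeed covers both regimes $\alpha<1$ and $\alpha=1$.
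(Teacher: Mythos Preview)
Your proof is correct, but both halves take a different path from the paper's.

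For $\alpha>1$, the paper rewrites $H_\alpha(\mathcal{P})=\frac{\alpha}{1-\alpha}\log\norm{\mathcal{P}}{\alpha}$ and then combines the reverse triangle inequality $\bigl|\norm{\mathcal{P}_n}{\alpha}-\norm{\mathcal{P}}{\alpha}\bigr|\le\norm{\mathcal{P}_n-\mathcal{P}}{\alpha}$ with the embedding $\norm{\cdot}{\alpha}\le\norm{\cdot}{1}$. Your mean-value-theorem estimate $\abs{x^\alpha-y^\alpha}\le\alpha\abs{x-y}$ on $[0,1]$ is a more direct and self-contained route: it avoids any appeal to norm structure and even yields an explicit Lipschitz constant $\alpha$ for $S_\alpha$ with respect to total variation. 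The paper's argument, on the other hand, generalizes verbatim to any situation where one already knows the norm comparison.

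For $\alpha\le1$, the paper does not build an explicit perturbation but instead invokes Proposition~\ref{prop_dense}: since $\Gamma(\alpha_c)$ is dense for every $\alpha_c$, one can pick $\alpha_c>\alpha$ and find $\mathcal{P}_n\to\mathcal{P}$ with $\mathcal{P}_n\in\Gamma(\alpha_c)$, whence $H_\alpha(\mathcal{P}_n)=\infty$ identically. This shows discontinuity at \emph{every} $\mathcal{P}\in\Gamma$, which is stronger than the theorem requires. Your explicit block-spreading construction at the point mass is entirely self-contained (it does not rely on any earlier proposition) and suffices for the stated claim, and the same perturbation placed on the tail would in fact recover discontinuity at an arbitrary $\mathcal{P}$ as well.
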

\begin{proof}
 The discontinuity for $\alpha<1$ can be established as a corollary 
to Proposition \ref{prop_dense}. Take some $\alpha_c>\alpha$. 
In any $\epsilon$-neighborhood of $\mathcal{P}$ there are always 
members of $\Gamma(\alpha_c)$ so we can find a sequence of distributions 
$\mathcal{P}_n\rightarrow\mathcal{P}$ with $\mathcal{P}_n\in\Gamma(\alpha_c)$, 
$\forall n$. In this case, 
$H_\alpha(\mathcal{P}_n)=\infty$ for all $n$ which clearly means that 
$H_\alpha$ is discontinuous. The discontinuity for $\alpha=1$, i.e., 
the discontinuity of Shannon entropy \cite{ho1,wehrl} can be 
proven in a similar way: One can construct a sequence of distributions 
whose entropies diverge by changing the asymptotics of the original 
distribution and staying within a small distance from it. When $\alpha>1$, 
however, R\'{e}nyi entropy is a continuous function. Observe that 
\begin{equation}
 H_\alpha(\mathcal{P})
     =\frac{\alpha}{1-\alpha}\log{\bigg(\sum_{n=1}^\infty p_n^\alpha\bigg)^{\frac{1}{\alpha}}}
     =\frac{\alpha}{1-\alpha}\log{{\lVert\mathcal{P}\rVert}_\alpha} , 
\end{equation}
where ${\lVert\cdotp\rVert}_\alpha$ is the familiar $\ell^\alpha$ norm, 
and hence it is enough to prove the continuity of ${\lVert\mathcal{P}\rVert}_\alpha$. 
It is well known that norm is a continuous function with respect to the metric 
it induces \cite{norma}, i.e., for any sequence of distributions $\mathcal{P}_n$ 
with ${\lVert\mathcal{P}-\mathcal{P}_n\rVert}_\alpha\to0$ we must have 
${\lVert\mathcal{P}_n\rVert}_\alpha\to{\lVert\mathcal{P}\rVert}_\alpha$, 
which follows from the fact that 
\begin{equation}
 {\lVert\mathcal{P}_n-\mathcal{P}\rVert}_\alpha \geq 
        \left|{{\lVert\mathcal{P}_n\rVert}_\alpha - {\lVert\mathcal{P}\rVert}_\alpha}\right|.
\end{equation}
Now continuity with respect to the total variation distance, which we are interested in, 
is easily established by observing that 
\begin{equation}
\label{eq_dist_ineq}
 {\lVert\mathcal{P}-\mathcal{P}_n\rVert}_1 \geq {\lVert\mathcal{P}-\mathcal{P}_n\rVert}_\alpha.
\end{equation}
\end{proof}
\par The following theorem gives more insight into the discontinuity of 
$H_\alpha(\mathcal{P})$ for $\alpha\leq1$. Its special case, for 
$\alpha=1$, is proven in \cite{ho1}.
\begin{theorem}
\label{th_cont2}
 Let $\alpha\in(0,1]$ and let $\mathcal{P}$ be a probability 
distribution over $\mathbb{Z}_+$. Then there exists a sequence 
of distributions $\mathcal{P}_n$ converging to $\mathcal{P}$ with respect to 
the total variation distance, such that 
\begin{equation}
 \lim_{n\to\infty} H_\alpha(\mathcal{P}_n) = H_\alpha(\mathcal{P}) + r
\end{equation}
for arbitrary $r\in[0,\infty]$.
\end{theorem}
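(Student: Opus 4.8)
The plan is to exploit the very \emph{fragmentation} mechanism that forces the discontinuity: since $x\mapsto x^\alpha$ (for $\alpha<1$) and $x\mapsto-x\log x$ (for $\alpha=1$) dwarf $x$ as $x\to0^+$, a fixed but tiny amount of probability mass, split into sufficiently many atoms pushed far out in the tail, inflates the relevant sum by an arbitrarily large yet precisely tunable amount while costing almost nothing in total variation. First I would dispose of the trivial case $H_\alpha(\mathcal{P})=\infty$ (that is, $\alpha\notin\mathcal{R}(\mathcal{P})$): here $H_\alpha(\mathcal{P})+r=\infty$ for every $r$, and the constant sequence $\mathcal{P}_n=\mathcal{P}$ already suffices. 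So I assume $\alpha\in\mathcal{R}(\mathcal{P})$, i.e.\ $S:=\sum_{m}p_m^\alpha<\infty$ (for $\alpha<1$), respectively $H_1(\mathcal{P})<\infty$.

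For $\alpha<1$ I would first recast the target in terms of the sum: because $H_\alpha=\frac{1}{1-\alpha}\log S$, the requirement $\lim_nH_\alpha(\mathcal{P}_n)=H_\alpha(\mathcal{P})+r$ is equivalent to forcing the corresponding sum $S_n$ to converge to $S\,b^{(1-\alpha)r}$, i.e.\ to adding a prescribed increment $\Delta:=S\big(b^{(1-\alpha)r}-1\big)\in[0,\infty]$. The construction I would use is \emph{delete-and-fragment}: pick indices $N_n\to\infty$, delete the original tail $(p_{N_n},p_{N_n+1},\dots)$ — whose mass $R_n:=\sum_{m\ge N_n}p_m\to0$ and whose sum contribution $\sum_{m\ge N_n}p_m^\alpha\to0$ both vanish, the latter precisely because $S<\infty$ — and in its place put $M_n$ equal atoms of size $R_n/M_n$ at the now-vacated locations $N_n,\dots,N_n+M_n-1$. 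Total mass is preserved, $\norm{\mathcal{P}-\mathcal{P}_n}{1}\le 2R_n\to0$, and the fresh atoms contribute exactly $M_n^{1-\alpha}R_n^\alpha$, so that
\begin{equation*}
 S_n=\Big(S-\sum_{m\ge N_n}p_m^\alpha\Big)+M_n^{1-\alpha}R_n^\alpha .
\end{equation*}
Since $1-\alpha>0$, choosing the integer $M_n$ nearest to $\big(\Delta/R_n^\alpha\big)^{1/(1-\alpha)}$ (which tends to $\infty$ because $R_n\to0$) gives $M_n^{1-\alpha}R_n^\alpha\to\Delta$, hence $S_n\to S+\Delta$ and the claim; for $\Delta=\infty$ I simply let $M_n$ grow faster, e.g.\ so that $M_n^{1-\alpha}R_n^\alpha\to\infty$.

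The case $\alpha=1$ (already in \cite{ho1}) is handled by the identical scheme, only the bookkeeping changes: the $M_n$ atoms now contribute $-M_n\frac{R_n}{M_n}\log\frac{R_n}{M_n}=R_n\log M_n+R_n\log\frac1{R_n}$ to $H_1$. As $R_n\log\frac1{R_n}\to0$ and the deleted tail changes $H_1$ by a vanishing amount (finiteness of $H_1(\mathcal{P})$), the choice $M_n=\lceil\exp(r/R_n)\rceil$ yields $H_1(\mathcal{P}_n)\to H_1(\mathcal{P})+r$.

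I expect the main obstacle to be the bookkeeping that isolates the tunable term: one must check that both perturbation corrections — the contribution of the \emph{deleted} original tail and the rounding of $M_n$ to an integer — are genuinely negligible against the deliberately injected $M_n^{1-\alpha}R_n^\alpha$, so that the limit of $S_n$ equals \emph{exactly} $S+\Delta$ and not merely something comparable. The decisive structural fact, which I would state explicitly, is that for $\alpha\le1$ the per-atom cost in total variation ($R_n/M_n$) and the per-atom cost in the entropy sum ($(R_n/M_n)^\alpha$) differ by the factor $(R_n/M_n)^{\alpha-1}\to\infty$; it is precisely this gap between the $\ell^1$ and the $\ell^\alpha$ (respectively entropy) scales that both produces the discontinuity and lets us realize \emph{any} prescribed $r\in[0,\infty]$.
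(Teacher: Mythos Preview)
Your argument is correct and shares the paper's overall strategy---keep the first block of $\mathcal{P}$ intact and replace the tail by something with the same mass but a prescribed contribution to $\sum p_i^\alpha$---but the actual construction differs. The paper grafts on a \emph{geometric} tail $(B_{(n)},B_{(n)}q_{(n)},B_{(n)}q_{(n)}^2,\ldots)$ and solves the two equations (mass and $\alpha$-sum) for $B_{(n)},q_{(n)}$, thereby achieving the \emph{exact} equality $H_\alpha(\mathcal{P}_n)=H_\alpha(\mathcal{P})+r$ for every sufficiently large $n$; this requires checking that $(1-q)^\alpha/(1-q^\alpha)$ sweeps $(1,\infty)$ on $q\in(0,1)$. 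Your uniform-atom scheme is more elementary---no equations to solve, just pick $M_n\approx(\Delta/R_n^\alpha)^{1/(1-\alpha)}$---and it makes the mechanism behind the discontinuity (the gap between the $\ell^1$ and $\ell^\alpha$ scales) completely transparent, at the price of getting only $H_\alpha(\mathcal{P}_n)\to H_\alpha(\mathcal{P})+r$ rather than equality along the sequence. Two small points you should clean up: (i) as in the paper, the finite-support case needs a separate one-line fix (shave $\epsilon_n\to0$ off an existing atom and fragment that instead of a nonexistent tail); (ii) in the $\alpha=1$ computation your $\exp$ should be the base-$b$ exponential to match the logarithm in \eqref{eq_def}.
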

\begin{proof}
 The proof for $\alpha=1$ can be found in \cite{ho1}, so assume that $\alpha\in(0,1)$.
The case $r=\infty$ is taken care of by taking $\mathcal{P}_n\in\Gamma(\alpha_c)$ 
for some $\alpha_c>\alpha$, as in the proof of Theorem \ref{th_cont}. In that case 
$H_\alpha(\mathcal{P}_n)=\infty, \forall n$, and so 
$\lim_{n\to\infty}H_\alpha(\mathcal{P}_n)=\infty$. The case $r=0$ 
is trivial, take for example $\mathcal{P}_n=\mathcal{P}$ (but nontrivial 
sequences with $\lim_{n\to\infty} H_\alpha(\mathcal{P}_n) = H_\alpha(\mathcal{P})$ 
can also be constructed). So let $r\in(0,\infty)$. We will construct a sequence 
of distributions $\mathcal{P}_n=(p_{1(n)}, p_{2(n)}, \ldots)$ converging to 
$\mathcal{P}$ and such that 
\begin{equation}
 H_\alpha(\mathcal{P}_n) = H_\alpha(\mathcal{P}) + r
\end{equation}
for all $n$. If $b$ is the base of the logarithm in \eqref{eq_def}, this is equivalent to 
\begin{equation}
\label{eq_b}
 \sum_{i=1}^\infty p_{i(n)}^\alpha = b^{(1-\alpha)r}\sum_{i=1}^\infty p_{i}^\alpha.
\end{equation}
Since $\alpha\in(0,1)$ and $r\in(0,\infty)$, we have $b^{(1-\alpha)r}\in(1,\infty)$. It 
follows that the righthand side of \eqref{eq_b}, call it $h$, satisfies
\begin{equation}
\label{eq_h}
 h > \sum_{i=1}^\infty p_i^\alpha.
\end{equation}
Therefore, we want to construct a sequence $\mathcal{P}_n$ with 
$\sum_{i=1}^\infty p_{i(n)}^\alpha = h$, for arbitrary given $h$ 
satisfying \eqref{eq_h}. The construction is as follows
\begin{equation}
\label{eq_geom}
\begin{aligned}
 \mathcal{P}_n &= (p_{1(n)}, p_{2(n)}, \ldots)  \\
               &= (p_1, \ldots, p_n, B_{(n)}, B_{(n)}q_{(n)}, B_{(n)}q_{(n)}^2, \ldots).
\end{aligned} 
\end{equation}
In other words, we keep the first $n$ probability masses of $\mathcal{P}$ and 
replace the tail of $\mathcal{P}$ with the tail of a geometric distribution. 
According to \eqref{eq_b} and \eqref{eq_geom}, $B_{(n)}$ and $q_{(n)}$ should 
satisfy the following:
\begin{equation}
\label{eq_geom_cond1}
 \sum_{i=0}^\infty B_{(n)}q_{(n)}^i \equiv B_{(n)}\frac{1}{1-q_{(n)}} 
                        = \sum_{i=n+1}^\infty p_i
\end{equation}
and
\begin{equation}
\label{eq_geom_cond2}
 \sum_{i=1}^n p_i^\alpha + \sum_{i=0}^\infty B_{(n)}^\alpha q_{(n)}^{\alpha i} \equiv 
             \sum_{i=1}^n p_i^\alpha + B_{(n)}^\alpha \frac{1}{1-q_{(n)}^\alpha}  
                        = h.
\end{equation}
We need to verify that such $B_{(n)}$ and $q_{(n)}$ exist, i.e., that the above two 
equations have non-negative solutions. Express $B_{(n)}$ from \eqref{eq_geom_cond1} 
\begin{equation}
 B_{(n)}=\left(1-q_{(n)}\right)\sum_{i=n+1}^\infty p_i
\end{equation}
and insert it into \eqref{eq_geom_cond2} to get 
\begin{equation}
\label{eq_solution}
 \frac{\left(1-q_{(n)}\right)^\alpha}{1-q_{(n)}^\alpha} = 
             \frac{h-\sum_{i=1}^n p_i^\alpha}{\left(\sum_{i=n+1}^\infty p_i\right)^\alpha}.
\end{equation}
Now we need to check that the above equation 
has a solution for $q_{(n)}\in(0,1)$ and for all $n\geq n_0$ for some $n_0$. 
To show this observe that the lefthand side is a continuous and monotonically 
increasing function in $q_{(n)}$ starting from $1$ and going to $\infty$ when $q_{(n)}\in(0,1)$. 
This means that \eqref{eq_solution} will have a solution whenever the righthand side 
is greater than $1$. 
This is indeed the case for all $n$ large enough. Namely, the numerator on the 
righthand side of \eqref{eq_solution} tends to $h-\sum_{i=1}^\infty p_i^\alpha$ as $n\to\infty$ 
which is by \eqref{eq_h} strictly positive, and the denominator tends to zero so the entire 
righthand side tends to $\infty$ and is therefore greater than $1$ for $n\geq n_0$ 
for some $n_0$. This means that, for all $n$ (large enough), there exist $B_{(n)}>0$ 
and $q_{(n)}\in(0,1)$ such that \eqref{eq_geom_cond1} and \eqref{eq_geom_cond2} hold. 
Thus we have found a sequence 
$(\mathcal{P}_n)$ with $H_\alpha(\mathcal{P}_n) = H_\alpha(\mathcal{P}) + r$ 
for arbitrary $r\in(0,\infty)$, and, furthermore, from \eqref{eq_geom} and 
\eqref{eq_geom_cond1} it is easy to see that $\mathcal{P}_n\to\mathcal{P}$ 
when $n\to\infty$ with respect to the variational distance. 
We should mention that this proof assumes that $\mathcal{P}$ 
has an infinite support and it needs to be modified when this is 
not true. This is not hard to do but we omit it here (see the proof of 
Proposition \ref{prop_dense} for a similar construction).
\end{proof}
\par Constant $r$ in the previous theorem was taken to be non-negative. 
This is necessary, as the following theorem shows. 
\begin{theorem}
\label{th_lower}
 Let $\mathcal{P}_n, \mathcal{P}$ be probability distributions over $\mathbb{Z}_+$. 
If $\mathcal{P}_n\to\mathcal{P}$ with respect to the total variation distance, then 
$\liminf_{n\to\infty} H_\alpha(\mathcal{P}_n) \geq H_\alpha(\mathcal{P})$. 
\end{theorem}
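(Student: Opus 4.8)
The plan is to read the inequality as a \emph{lower-semicontinuity} statement for $H_\alpha$ and to extract it from Fatou's lemma applied to the counting measure on $\mathbb{Z}_+$; no assumption on the support of $\mathcal{P}$ will be needed. First I would note that total variation convergence forces coordinatewise convergence: for each fixed $i$,
\begin{equation}
 \bigl|p_{i(n)}-p_i\bigr| \leq \sum_{j=1}^\infty\bigl|p_{j(n)}-p_j\bigr| = {\lVert\mathcal{P}_n-\mathcal{P}\rVert}_1 \longrightarrow 0,
\end{equation}
so $p_{i(n)}\to p_i$ for every $i$. For $\alpha\in(0,1)$ the entropy is an increasing function of the power sum $S(\mathcal{P})=\sum_{i=1}^\infty p_i^\alpha$, since $\tfrac{1}{1-\alpha}>0$ and the logarithm is increasing; for $\alpha=1$ it \emph{is} the sum $\sum_{i=1}^\infty(-p_i\log p_i)$. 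Hence it suffices to prove lower semicontinuity of these sums.

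Then I would carry out the Fatou step. Put $a_{i(n)}=p_{i(n)}^\alpha$ when $\alpha\in(0,1)$ and $a_{i(n)}=-p_{i(n)}\log p_{i(n)}$ when $\alpha=1$ (with the usual convention at mass $0$). In both cases the generating function ($x\mapsto x^\alpha$, respectively $x\mapsto -x\log x$) is continuous and non-negative on $[0,1]$, so $a_{i(n)}\geq0$ and, by the coordinatewise convergence above, $\lim_{n\to\infty}a_{i(n)}=a_i$ for each $i$, where $a_i$ is the corresponding term of $\mathcal{P}$. Fatou's lemma then gives
\begin{equation}
 \liminf_{n\to\infty}\sum_{i=1}^\infty a_{i(n)} \;\geq\; \sum_{i=1}^\infty\liminf_{n\to\infty}a_{i(n)} \;=\; \sum_{i=1}^\infty a_i,
\end{equation}
an inequality valid in $[0,\infty]$; in particular the divergent case $\alpha<\alpha_c(\mathcal{P})$ is absorbed automatically, since then the right-hand side is $+\infty$ and the bound forces $H_\alpha(\mathcal{P}_n)\to\infty$.

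Finally I would transfer the bound back to $H_\alpha$. For $\alpha=1$ the last display is already the assertion $\liminf_n H_1(\mathcal{P}_n)\geq H_1(\mathcal{P})$. For $\alpha\in(0,1)$ I would compose with $g(S)=\tfrac{1}{1-\alpha}\log S$: since $g$ is continuous and non-decreasing, $\liminf_n g(S_n)\geq g(\liminf_n S_n)\geq g\bigl(S(\mathcal{P})\bigr)=H_\alpha(\mathcal{P})$, where the middle step uses the Fatou bound $\liminf_n S_n\geq S(\mathcal{P})$. The case $\alpha>1$ falls outside this scheme---there $H_\alpha$ \emph{decreases} in the power sum, so semicontinuity of the sum would point the wrong way---but it needs nothing new, as Theorem~\ref{th_cont} already gives full continuity (hence equality) for $\alpha>1$. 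I expect the only genuinely delicate point to be exactly this direction bookkeeping: Fatou supplies lower semicontinuity of the \emph{sum} only, and one has to be sure the monotone map $S\mapsto H_\alpha$ preserves the inequality, which is why the Fatou argument succeeds precisely for $\alpha\leq1$ and must yield to the continuity result for $\alpha>1$.
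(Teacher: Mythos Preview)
Your proof is correct and follows essentially the same approach as the paper: both reduce to lower semicontinuity of the non-negative sums via coordinatewise convergence, you by invoking Fatou's lemma on the counting measure, the paper by an explicit truncation to the first $K$ coordinates and passage $K\to\infty$ (which is precisely how one proves Fatou in this discrete setting). The treatment of $\alpha>1$ by appeal to Theorem~\ref{th_cont} is identical, and your careful bookkeeping about the monotone map $S\mapsto H_\alpha$ makes explicit what the paper leaves implicit by working with $H_\alpha$ of the truncations directly.
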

\begin{proof}
 For $\alpha>1$, $H_\alpha$ is continuous and the claim is obviously true. 
Suppose $\alpha<1$. Let $\mathcal{P}_n=(p_{1(n)}, p_{2(n)} \ldots)$ and 
$\mathcal{P}=(p_1, p_2, \ldots)$, and let $\mathcal{P}_n^{(K)} = (p_{1(n)}, \ldots, p_{K(n)})$, 
$\mathcal{P}^{(K)} = (p_1, \ldots, p_K)$. $\mathcal{P}_n^{(K)}$ and $\mathcal{P}^{(K)}$ 
are obviously not probability distributions but that does not affect the proof. 
For example, $H_\alpha(\mathcal{P}_n^{(K)})$ are well-defined. 
Now, if $\mathcal{P}_n\to\mathcal{P}$ then also $\mathcal{P}_n^{(K)}\to\mathcal{P}^K$ 
when $n\to\infty$. It follows that 
\begin{equation}
\label{eq_partial_distr}
 \lim_{n\to\infty} H_\alpha(\mathcal{P}_n^{(K)}) = H_\alpha(\mathcal{P}^{(K)})
\end{equation}
because R\'{e}nyi entropies are continuous when the alphabet is finite. Now, since 
\begin{equation}
 \sum_{i=1}^\infty p_{i(n)}^\alpha \geq \sum_{i=1}^K p_{i(n)}^\alpha
\end{equation}
and hence (for $\alpha<1$) 
\begin{equation}
 H_\alpha(\mathcal{P}_n) \geq H_\alpha(\mathcal{P}_n^{(K)})
\end{equation}
it follows from \eqref{eq_partial_distr} that 
\begin{equation}
 \liminf_{n\to\infty} H_\alpha(\mathcal{P}_n) \geq H_\alpha(\mathcal{P}^{(K)}).
\end{equation}
This is true for all $K$ and so
\begin{equation}
 \liminf_{n\to\infty} H_\alpha(\mathcal{P}_n) \geq 
                                   \lim_{K\to\infty} H_\alpha(\mathcal{P}^{(K)})  
                        = H_\alpha(\mathcal{P}).
\end{equation}
The case $\alpha=1$ is completely analogous.
\end{proof}
The property stated in Theorem \ref{th_lower} is usually referred to as 
lower-semicontinuity. It is a well known property of Shannon entropy 
\cite{wehrl, ho2} and is now generalized to all R\'{e}nyi entropies. 
Also, the proof is much simpler, in our opinion, than those 
reported before for Shannon entropy.
\par We mention in this context one more property of $H_\alpha$.
\begin{theorem}
 $H_\alpha(\mathcal{P})$ is a $\cap$-convex function in $\mathcal{P}$ for 
$\alpha\leq1$ and is neither $\cap$- nor $\cup$-convex for $\alpha>1$.
\end{theorem}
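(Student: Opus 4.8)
The plan is to treat the two assertions separately, and to establish the concavity part by composing concavity-preserving operations. Fix $\alpha\in(0,1)$ and write $g(\mathcal{P})=\sum_{n=1}^\infty p_n^\alpha$. Since $x\mapsto x^\alpha$ is concave on $[0,\infty)$ for $0<\alpha<1$, every partial sum $\sum_{n=1}^N p_n^\alpha$ is concave in $\mathcal{P}$ (a finite sum of concave functions), and hence so is their increasing pointwise limit $g$, viewed as an $[0,\infty]$-valued function. As $\log$ is concave and nondecreasing, $\log g$ is concave, and multiplying by the positive constant $\tfrac{1}{1-\alpha}$ preserves concavity; therefore $H_\alpha(\mathcal{P})=\tfrac{1}{1-\alpha}\log g(\mathcal{P})$ is $\cap$-convex. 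The endpoints are immediate: $\alpha=1$ is the classical concavity of Shannon entropy, while $\alpha=0$ (the Hartley entropy $\log|\mathrm{supp}(\mathcal{P})|$) follows from $|\mathrm{supp}(\mathcal{T})|=|\mathrm{supp}(\mathcal{P})\cup\mathrm{supp}(\mathcal{Q})|\ge|\mathrm{supp}(\mathcal{P})|^{\lambda}\,|\mathrm{supp}(\mathcal{Q})|^{1-\lambda}$.

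For $\alpha>1$ the strategy is to produce one mixture violating convexity and another violating concavity; a single counterexample of each kind suffices, so no global estimate is required. To defeat $\cup$-convexity I would take two distinct point masses $\mathcal{P}=(1,0,0,\dots)$ and $\mathcal{Q}=(0,1,0,\dots)$, for which $H_\alpha(\mathcal{P})=H_\alpha(\mathcal{Q})=0$, while their midpoint $\mathcal{T}=\tfrac12\mathcal{P}+\tfrac12\mathcal{Q}=(\tfrac12,\tfrac12,0,\dots)$ has $H_\alpha(\mathcal{T})=\tfrac{1}{1-\alpha}\log 2^{1-\alpha}=\log 2>0$. Thus the value at the midpoint strictly exceeds the average of the endpoint values, which contradicts $\cup$-convexity.

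To defeat $\cap$-convexity I would mix a point mass with an increasingly spread-out distribution. Take $\mathcal{P}=(1,0,0,\dots)$ and let $\mathcal{Q}$ be uniform on $\{1,\dots,N\}$, with $\lambda=\tfrac12$; then $H_\alpha(\mathcal{P})=0$ and $H_\alpha(\mathcal{Q})=\log N$, so the average is $\tfrac12\log N$. In the mixture $\mathcal{T}=\tfrac12\mathcal{P}+\tfrac12\mathcal{Q}$ the first symbol carries mass $t_1\ge\tfrac12$, whence $\sum_n t_n^\alpha\ge(\tfrac12)^\alpha$ and, because $\tfrac{1}{1-\alpha}<0$, $H_\alpha(\mathcal{T})\le\tfrac{\alpha}{\alpha-1}\log 2$, a bound independent of $N$. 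Letting $N\to\infty$, the average entropy $\tfrac12\log N$ diverges while $H_\alpha(\mathcal{T})$ stays bounded, so for all sufficiently large $N$ we obtain $H_\alpha(\mathcal{T})<\tfrac12 H_\alpha(\mathcal{P})+\tfrac12 H_\alpha(\mathcal{Q})$, contradicting $\cap$-convexity. Combined with the previous paragraph, this shows $H_\alpha$ is neither $\cap$- nor $\cup$-convex when $\alpha>1$.

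I expect the main subtlety to lie in the non-concavity counterexample. Finite two-point mixtures do not suffice: one checks that on a binary alphabet $H_\alpha$ is still concave for $1<\alpha\le 2$, so the construction must genuinely exploit an unbounded support, which is exactly what the growing uniform distribution supplies. The key point to verify carefully is that the dominant-mass bound $\sum_n t_n^\alpha\ge(\tfrac12)^\alpha$ indeed keeps $H_\alpha(\mathcal{T})$ bounded as the support of $\mathcal{Q}$ grows, so that the divergence of the average entropy is the true source of the contradiction. By contrast, the concavity half is routine once the composition argument is arranged, the only care being the justification of concavity of the infinite sum $g$ in the extended reals and the separate treatment of the endpoints $\alpha=0,1$.
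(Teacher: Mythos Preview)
Your argument is correct. The concavity part via composition (sum of concave $x\mapsto x^\alpha$, then the nondecreasing concave $\log$, then a positive scalar) is sound, including the passage to the infinite sum as an increasing limit in the extended reals; the treatment of the endpoints $\alpha=0,1$ is fine. The two counterexamples for $\alpha>1$ are valid: point masses at distinct atoms defeat $\cup$-convexity, and the point-mass/uniform mixture with growing $N$ defeats $\cap$-convexity exactly as you state, since $t_1\ge\tfrac12$ forces $H_\alpha(\mathcal T)\le\tfrac{\alpha}{\alpha-1}\log 2$ uniformly in $N$.

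As to comparison with the paper: there is essentially nothing to compare. The paper does not give a proof; it simply refers to Ben-Bassat and Raviv and remarks that the finite-alphabet arguments carry over. Your proposal supplies those details explicitly. The concavity half is exactly the standard composition argument one would expect in the cited reference. For the non-concavity half you make a genuinely useful observation that the paper does not: a counterexample cannot live on a two-point alphabet for all $\alpha>1$, so one must allow the support of one component to grow. Your dominant-mass bound is the cleanest way to see this, and it also shows that the same counterexample works verbatim over $\mathbb{Z}_+$, which is precisely the ``easy transfer'' the paper asserts without proof.
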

This is proven in \cite{error} and those arguments easily transfer to the 
infinite case.
\section{The limiting case $\alpha \rightarrow 1$}
Now let us consider the behaviour of R\'{e}nyi entropy at the point $\alpha=1$. For a fixed 
finite alphabet, R\'{e}nyi entropy is defined at this point \eqref{eq_H1} so 
as to preserve continuity (in $\alpha$) \cite{renyi}. There are  
several issues in the case of an infinite alphabet which make continuity 
more difficult to prove than in the finite case. First, 
$H(\mathcal{P})$ might be infinite (see Example \ref{ex_inf_entropy}), and in that 
case it needs to be checked how $H_\alpha(\mathcal{P})$ behaves 
as $\alpha\to 1+$. Next, it is possible that $H(\mathcal{P})<\infty$ 
but $H_\alpha(\mathcal{P})=\infty$ for all $\alpha<1$ 
(see Example \ref{ex_fin_entropy}) 
in which case clearly $\alpha\to 1$ needs to be separated into two 
cases $\alpha\to {1-}$ and $\alpha\to {1+}$. And finally, even without 
these two situations, one needs to be careful when interchanging 
limiting operations because infinite sums are involved. 
\begin{theorem}
\label{th_shannon}
 If $\alpha_c(\mathcal{P})<1$ then 
                 $\lim_{\alpha\to 1}H_\alpha(\mathcal{P})=H(\mathcal{P})$.
If $\alpha_c(\mathcal{P})=1$ then 
                 $\lim_{\alpha\to{1+}}H_\alpha(\mathcal{P})=H(\mathcal{P})$.
\end{theorem}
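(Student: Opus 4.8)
The plan is to treat both statements through a single computation of $\lim H_\alpha(\mathcal P)$, where for $\alpha\neq1$ we write $H_\alpha(\mathcal P)=\frac{\log S(\alpha)}{1-\alpha}$ with $S(\alpha)=\sum_{n=1}^\infty p_n^\alpha$ (the logarithm being base $b$, and zero masses contributing nothing under the convention $0\log 0=0$). At $\alpha=1$ this is an indeterminate $0/0$ form, since $S(1)=\sum_n p_n=1$ forces $\log S(\alpha)\to0$ while $1-\alpha\to0$. To make the limit $S(\alpha)\to1$ rigorous I would use monotone convergence from the right (as $\alpha\downarrow1$ we have $p_n^\alpha\uparrow p_n$) and, when $\alpha_c<1$, dominated convergence from the left (for $\alpha\in[\alpha_0,1]$ the summands are dominated by the summable $p_n^{\alpha_0}$, where $\alpha_0<1$ lies in the region of convergence). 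The natural tool for the $0/0$ form is then L'Hôpital's rule.

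The heart of the argument is the term-by-term differentiation $S'(\alpha)=\sum_{n=1}^\infty p_n^\alpha\ln p_n=-\sum_{n=1}^\infty p_n^\alpha\ln(1/p_n)$, which I would justify by proving uniform convergence of the differentiated series on a one-sided neighborhood of $\alpha=1$ via the Weierstrass criterion. The key estimate is the elementary bound $\ln(1/x)\le M_\delta\,x^{-\delta}$, valid for every $\delta>0$ and all $x\in(0,1]$, giving $p_n^\alpha\ln(1/p_n)\le M_\delta\,p_n^{\alpha-\delta}$. When $\alpha_c<1$ I would pick $\alpha_0<\alpha_1<1$, set $\delta=\alpha_1-\alpha_0$, and dominate the terms by the summable $p_n^{\alpha_0}$ on $[\alpha_1,\infty)$; when $\alpha_c=1$ I would work on $[\alpha_1,\beta]$ with $1<\alpha_1$, choose $\delta<\alpha_1-1$, and dominate by the summable $p_n^{\alpha_1-\delta}$ (recall $\sum_n p_n^\gamma<\infty$ for every $\gamma>1$). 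Uniform convergence of the derivative series together with convergence of $S$ then yields differentiability of $S$ on the relevant interval with the claimed derivative.

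With this in hand, L'Hôpital's rule gives
\[
\lim_{\alpha\to1}H_\alpha(\mathcal P)=\lim_{\alpha\to1}\frac{S'(\alpha)/(S(\alpha)\ln b)}{-1}=\lim_{\alpha\to1}\frac{\sum_{n=1}^\infty p_n^\alpha\ln(1/p_n)}{S(\alpha)\,\ln b},
\]
the limit being one-sided ($\alpha\to1+$) when $\alpha_c=1$. Since $S(\alpha)\to1$, it remains to evaluate the numerator. When $\alpha_c<1$ the differentiated series is continuous at $\alpha=1$ (a uniform limit of continuous functions on an interval containing $1$), so the numerator tends to $\sum_n p_n\ln(1/p_n)=H(\mathcal P)\ln b$, which is finite, and the two-sided limit equals $H(\mathcal P)$. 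When $\alpha_c=1$ I would instead invoke monotone convergence: as $\alpha\downarrow1$, $p_n^\alpha\ln(1/p_n)\uparrow p_n\ln(1/p_n)$, so the numerator tends to $H(\mathcal P)\ln b\in(0,\infty]$ and hence $\lim_{\alpha\to1+}H_\alpha(\mathcal P)=H(\mathcal P)$, whether $H(\mathcal P)$ is finite or infinite.

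The step I expect to be the main obstacle is the term-by-term differentiation: unlike in the finite-alphabet case, interchanging $\frac{d}{d\alpha}$ with the infinite sum is not automatic and must be secured by the uniform-convergence estimate above. A secondary subtlety, specific to $\alpha_c=1$, is that the differentiated series need not converge at $\alpha=1$, so continuity cannot be used there; monotone convergence supplies the (possibly infinite) limit, and the extended form of L'Hôpital's rule, valid when $\lim f'/g'$ is infinite, then delivers both the finite- and infinite-entropy subcases in one stroke.
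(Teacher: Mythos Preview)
Your proof is correct and shares the same backbone as the paper's argument---L'H\^opital's rule applied to $\frac{\log S(\alpha)}{1-\alpha}$, with term-by-term differentiation of $S(\alpha)=\sum_n p_n^\alpha$ justified via the Weierstrass $M$-test---but it diverges from the paper in two meaningful ways. First, your majorant in the $M$-test is different: the paper dominates $-p_n^\alpha\log p_n$ by $-p_n\log p_n$ on $[1,\infty)$, which only works under the assumption $H(\mathcal P)<\infty$ and forces a case split; you instead use the elementary bound $\ln(1/x)\le M_\delta x^{-\delta}$ to dominate by $M_\delta\,p_n^{\alpha_0}$ for some $\alpha_0$ in the region of convergence, which is indifferent to whether $H(\mathcal P)$ is finite. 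Second, and more substantially, the paper handles the subcase $H(\mathcal P)=\infty$ by an entirely separate argument: it truncates $\mathcal P$ to finite-support distributions $\mathcal Q_n$, uses superadditivity of $x^\alpha$ for $\alpha>1$ to get $H_\alpha(\mathcal Q_n)\le H_\alpha(\mathcal P)$, and then invokes lower semicontinuity of Shannon entropy to force $\lim_{\alpha\to1+}H_\alpha(\mathcal P)=\infty$. You avoid this detour by staying inside the L'H\^opital computation and invoking monotone convergence on the numerator $\sum_n p_n^\alpha\ln(1/p_n)$ as $\alpha\downarrow1$, together with the extended form of L'H\^opital that permits $\lim f'/g'=+\infty$. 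Your route is more unified and arguably cleaner; the paper's route has the virtue of exhibiting the link to lower semicontinuity and the truncation device, which resurface elsewhere in the paper.
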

\begin{proof}
 Assume first that $H(\mathcal{P})<\infty$. Then we have 
\begin{align}
\label{eq_lim_step1}
 \lim_{\alpha\to 1+} H_\alpha(\mathcal{P})
 &=  \lim_{\alpha\to{1+}} \frac{\log\sum_{n=1}^{\infty}p_n^\alpha}{1-\alpha}   \\
\label{eq_lim_step2}
 &=  \lim_{\alpha\to{1+}} \frac{\sum_{n=1}^{\infty}{p_n^\alpha\log{p_n}}}
                               {- \sum_{n=1}^{\infty}p_n^\alpha}               \\
\label{eq_lim_step3}
 &=  -\frac{\lim_{\alpha\to{1+}} \sum_{n=1}^{\infty}{p_n^\alpha\log{p_n}}}
           {\lim_{\alpha\to{1+}} \sum_{n=1}^{\infty}p_n^\alpha}                \\
\label{eq_lim_step4}
 &=  -\frac{\sum_{n=1}^{\infty}{\lim_{\alpha\to{1+}} p_n^\alpha\log{p_n}}}
           {\sum_{n=1}^{\infty}{\lim_{\alpha\to{1+}} p_n^\alpha}}              \\
\label{eq_lim_step5}
 &=  \frac{-\sum_{n=1}^{\infty}{p_n\log{p_n}}}
          {\sum_{n=1}^{\infty} p_n}                                            \\
\label{eq_lim_step6}
 &=   H(\mathcal{P}).
\end{align}
Let us justify the above steps. \eqref{eq_lim_step1} is by definition. 
\eqref{eq_lim_step2} follows from L'H\^{o}pital's rule. 
A sufficient condition for its application \cite[Theorem 5.13]{analiza}, 
is the existence of the limit of the ratio of the derivatives which will 
follow from subsequent equations and our assumption $H(\mathcal{P})<\infty$. 
The equality \eqref{eq_lim_step3} is justified by the fact that the limit of the 
denominator is not zero. \eqref{eq_lim_step4} follows from uniform convergence of the 
series $\sum_{n=1}^{\infty}{p_{n}^\alpha\log{p_n}}$ and 
$\sum_{n=1}^{\infty}{p_{n}^\alpha}$ on $[1,\infty)$. This is 
established easily by Weierstrass' criterion \cite{analiza} using the following 
facts (valid for $\alpha>1$) 
\begin{eqnarray}
- p_{n}^\alpha\log{p_n} < - p_{n}\log{p_n},\quad p_n^\alpha < p_{n},   \\
-\sum_{n=1}^{\infty}{p_{n}\log{p_n}}=H(\mathcal{P}),\quad \sum_{n=1}^{\infty} p_{n}=1.
\end{eqnarray}
Steps \eqref{eq_lim_step5} and \eqref{eq_lim_step6} are obvious. 
If $\alpha_c(\mathcal{P})=1$ then clearly the above limit is the only 
one that makes sense. If $\alpha_c(\mathcal{P})<1$ then one can take 
any $\alpha_0\in(\alpha_c,1)$ and repeat the above arguments about 
uniform convergence on $[\alpha_0,\infty)$ and then the claim is true 
when $\alpha\to 1$ (all the other steps are identical). It remains to 
be shown that $\lim_{\alpha\to{1+}} H_\alpha(\mathcal{P})=\infty$ when 
$H(\mathcal{P})=\infty$. To prove this we define a sequence of distributions 
\begin{equation}
\label{eq_distrib_Qn}
 \mathcal{Q}_n=(q_{1(n)},\ldots,q_{n(n)})=(p_1,\ldots,p_{n-1},\sum_{i=n}^{\infty}p_i).
\end{equation}
We have $\lim_{n\to\infty}\mathcal{Q}_n=\mathcal{P}$ 
in the sense that variational distance between $\mathcal{Q}_n$ an 
$\mathcal{P}$ tends to zero. Also 
\begin{equation}
\label{eq_lim_HQn}
 \lim_{n\to\infty} H(\mathcal{Q}_n)=\infty=H(\mathcal{P}).
\end{equation}
This follows from the fact that Shannon entropy is lower-semicontinuous \cite{wehrl}, 
namely $\liminf_{n\to\infty} H(\mathcal{Q}_n)\geq H(\mathcal{P})$ 
(in general, however, $\mathcal{Q}_n\to\mathcal{P}$ does not imply 
$H(\mathcal{Q}_n)\to H(\mathcal{P})$). Observe that, for $\alpha>1$, 
\begin{equation}
\label{eq_temp}
 \bigg(\sum_{i=n}^{\infty}p_i\bigg)^\alpha \geq \sum_{i=n}^{\infty} p_i^\alpha 
\end{equation}
because the function $x^\alpha$ is superadditive for $\alpha>1$. 
Now \eqref{eq_distrib_Qn} and \eqref{eq_temp} give 
\begin{equation}
 \sum_{i=1}^{n} q_{i(n)}^\alpha \geq \sum_{i=1}^{\infty} p_i^\alpha
\end{equation}
which implies
\begin{equation}
 H_\alpha(\mathcal{Q}_n) \leq H_\alpha(\mathcal{P}).
\end{equation}
This is true for any $\alpha>1$ and all $n\geq 1$. Taking limits on both 
sides we get 
\begin{equation}
 \lim_{\alpha\to 1+} H_\alpha(\mathcal{Q}_n) \leq 
                                     \lim_{\alpha\to 1+} H_\alpha(\mathcal{P})
\end{equation}
which holds for all $n$. Now, since $H(\mathcal{Q}_n)<\infty$, 
by the first part of our proof the lefthand side is equal to 
$H(\mathcal{Q}_n)$. And since $H(\mathcal{Q}_n)$ is unbounded \eqref{eq_lim_HQn}, 
the righthand side must be unbounded too, i.e., 
$\lim_{\alpha\to{1+}} H_\alpha(\mathcal{P})=\infty$. 
This completes the proof of the theorem.
\end{proof}
\par Note that, unlike in the case of finite and fixed alphabet, one cannot claim 
that $\lim_{\alpha\to1}H_\alpha(\mathcal{P})=H(\mathcal{P})$ 
but only $\lim_{\alpha\to1+}H_\alpha(\mathcal{P})=H(\mathcal{P})$. 
\par Here are also two, potentially useful, restatements of the above 
theorem (just omit the logarithms). For any sequence 
$(p_1,p_2,\ldots)$, $p_n\geq0$, $\sum_{n=1}^\infty p_n = 1$, 
\begin{equation}
 \lim_{\epsilon\to0+} \left(\sum_{n=1}^{\infty} p_n^{1+\epsilon}\right)^\frac{1}{\epsilon} = 
             \prod_{n=1}^{\infty} {p_n}^{p_n},
\end{equation}
or
\begin{equation}
 \lim_{\alpha\to{1+}} {\lVert\mathcal{P}\rVert}_\alpha^\frac{\alpha}{\alpha-1} = 
             \prod_{n=1}^{\infty} {p_n}^{p_n} . 
\end{equation}
\par Let us exemplify one consequence of these results. Let 
$\mathcal{P}$ be some distribution over a countably infinite alphabet 
such that $H(\mathcal{P})<\infty$. In \cite{ho1}, it is shown that there 
always exists a sequence of distributions $\mathcal{P}_n$ such that 
$\mathcal{P}_n\to \mathcal{P}$, but $H(\mathcal{P}_n)\nrightarrow H(\mathcal{P})$. 
Actually, it is shown \cite[Theorem 2]{ho1}, that for any $c\geq0$, there is 
such a sequence $\mathcal{P}_n$ so that 
$\lim_{n\to\infty} H(\mathcal{P}_n)=H(\mathcal{P})+c$ (this is a special case 
of Theorem \ref{th_cont2} above). Using this and Theorem \ref{th_shannon}, one concludes that 
$\lim_{n\to\infty}\lim_{\alpha\to{1+}} H_\alpha(\mathcal{P}_n)=
\lim_{n\to\infty}H(\mathcal{P}_n)$ need not equal $H(\mathcal{P})$. 
On the other hand, Theorems \ref{th_cont} and \ref{th_shannon} guarantee that  
$\lim_{\alpha\to{1+}}\lim_{n\to\infty} H_\alpha(\mathcal{P}_n)= 
\lim_{\alpha\to{1+}}H_\alpha(\mathcal{P})=H(\mathcal{P})$ 
for any sequence $\mathcal{P}_n\to\mathcal{P}$. We summarize this in 
the form of a theorem whose proof we have essentially described.
\begin{theorem}
\label{th_main}
 Let $\mathcal{P}=(p_1,p_2,\ldots)$ be a probability distribution. 
Then, for any $r\in[0,\infty]$, there exists a sequence of distributions 
$\mathcal{P}_n$ converging to $\mathcal{P}$ with respect to the total 
variation distance, such that 
\begin{equation}
 \lim_{n\to\infty}\lim_{\alpha\to{1+}} H_\alpha(\mathcal{P}_n) = H(\mathcal{P}) + r,
\end{equation}
but for any such sequence
\begin{equation}
 \lim_{\alpha\to{1+}}\lim_{n\to\infty} H_\alpha(\mathcal{P}_n) = H(\mathcal{P}). 
\end{equation}
\end{theorem}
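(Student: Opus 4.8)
The plan is to assemble the three results already in hand---Theorem~\ref{th_cont}, Theorem~\ref{th_cont2}, and Theorem~\ref{th_shannon}---and to observe that the two iterated limits disagree precisely because $H_\alpha(\mathcal{P})$ is continuous in $\mathcal{P}$ for $\alpha>1$ while the pointwise identity $\lim_{\alpha\to{1+}}H_\alpha=H$ gets applied to a \emph{sequence} whose Shannon entropies are engineered not to converge to $H(\mathcal{P})$. No new construction is needed: the argument simply takes the two limits in the two orders and invokes the appropriate earlier theorem at each stage.

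For the first identity I would first fix the sequence. Applying Theorem~\ref{th_cont2} in the case $\alpha=1$ (the Shannon statement of \cite{ho1}) produces, for the prescribed $r\in[0,\infty]$, a sequence $\mathcal{P}_n\to\mathcal{P}$ in total variation with $\lim_{n\to\infty}H(\mathcal{P}_n)=H(\mathcal{P})+r$. Each $\mathcal{P}_n$ is itself a probability distribution, so Theorem~\ref{th_shannon} applies to it and gives $\lim_{\alpha\to{1+}}H_\alpha(\mathcal{P}_n)=H(\mathcal{P}_n)$; crucially this holds whether $\alpha_c(\mathcal{P}_n)<1$ or $\alpha_c(\mathcal{P}_n)=1$ and whether $H(\mathcal{P}_n)$ is finite or infinite. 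Evaluating the inner limit then yields
\[
 \lim_{n\to\infty}\lim_{\alpha\to{1+}}H_\alpha(\mathcal{P}_n)
   =\lim_{n\to\infty}H(\mathcal{P}_n)=H(\mathcal{P})+r .
\]

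For the second identity the limits are reversed, and here I would use that for every fixed $\alpha>1$ the functional $H_\alpha$ is continuous in $\mathcal{P}$ by Theorem~\ref{th_cont}. Since $\mathcal{P}_n\to\mathcal{P}$ in total variation, this forces $\lim_{n\to\infty}H_\alpha(\mathcal{P}_n)=H_\alpha(\mathcal{P})$ for each such $\alpha$; letting $\alpha\to{1+}$ and applying Theorem~\ref{th_shannon} to the single distribution $\mathcal{P}$ gives
\[
 \lim_{\alpha\to{1+}}\lim_{n\to\infty}H_\alpha(\mathcal{P}_n)
   =\lim_{\alpha\to{1+}}H_\alpha(\mathcal{P})=H(\mathcal{P}) .
\]
This step never uses the particular construction of the $\mathcal{P}_n$, only that $\mathcal{P}_n\to\mathcal{P}$; that is exactly the force of ``for any such sequence'' and explains why the extra $r$ seen in the first identity cannot survive the interchange of limits.

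The main point requiring care is not any single computation but the boundary cases, which I would dispatch explicitly. When $r=\infty$ (and whenever $H(\mathcal{P})=\infty$) the first identity asserts divergence, so one must verify that $\lim_{n\to\infty}H(\mathcal{P}_n)=\infty$ is attainable with $\mathcal{P}_n\to\mathcal{P}$; this follows from the $r=\infty$ branch of Theorem~\ref{th_cont2}, or directly from Proposition~\ref{prop_dense}, which allows $\mathcal{P}$ to be approximated by members of $\Gamma(1)$ of infinite Shannon entropy (cf.\ Example~\ref{ex_inf_entropy}), each having $\lim_{\alpha\to{1+}}H_\alpha(\mathcal{P}_n)=\infty$ by Theorem~\ref{th_shannon}. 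The remaining subtleties---that the $\mathcal{P}_n$ be genuine distributions converging in total variation, and the minor modification needed when $\mathcal{P}$ has finite support---are already handled inside the construction of Theorem~\ref{th_cont2}, so no further work is required and the two displays above complete the proof.
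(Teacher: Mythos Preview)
Your proposal is correct and follows essentially the same approach as the paper: the paper explicitly describes the proof in the paragraph preceding the theorem, combining Theorem~\ref{th_cont2} (in the $\alpha=1$ case from \cite{ho1}) with Theorem~\ref{th_shannon} for the first identity, and Theorems~\ref{th_cont} and~\ref{th_shannon} for the second. Your treatment of the boundary cases $r=\infty$ and $H(\mathcal{P})=\infty$ is in fact more careful than the paper's, which leaves these implicit.
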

\par In applied sciences one usually freely interchanges limiting operations, 
such as limits, sums, integrals, derivatives, etc. Such rules, however, do not 
always apply, and the above theorem provides an illustrative example of this, 
involving quantities with physical meaning.
\section{The limiting case $\alpha \rightarrow \infty$}
There is one more interesting limiting case for R\'{e}nyi entropies, 
namely $\alpha\to\infty$. It is known \cite{aczel, inflimit} that 
\begin{equation}
 \lim_{\alpha\to\infty}H_\alpha(\mathcal{Q})=-\log\max_n q_n 
\end{equation}
when $\mathcal{Q}$ has finite support. It is easy to prove that this
remains true for any $(q_1,\ldots,q_n)$, $q_i\geq0$, with 
$\sum_{i=1}^n q_i$ not necessarily equal to $1$. The same is true in 
the infinite case, the proof is just a little more subtle. 
\par Let $\mathcal{P}=(p_1,p_2,\ldots)$ be a probability distribution. First observe that 
\begin{align}
 \lim_{\alpha\to\infty} H_\alpha(\mathcal{P})&= 
          \lim_{\alpha\to\infty} \frac{1}{1-\alpha}\log\sum_{i=1}^{\infty} p_i^\alpha   \\
\label{eq_lim_inf_step2}
 &= \lim_{\alpha\to\infty} \frac{\sum_{i=1}^\infty p_i^\alpha\log p_i}
                                {-\sum_{i=1}^\infty p_i^\alpha}  \\
\label{eq_lim_inf_step3}
 &\geq \lim_{\alpha\to\infty} \frac{-\log\max_{i} p_i \sum_{i=1}^\infty p_i^\alpha}
                                   {\sum_{i=1}^\infty p_i^\alpha}  \\
\label{eq_lim_inf_step4}
 &= -\log\max_{i} p_i
\end{align}
where \eqref{eq_lim_inf_step2} is by L'H\^{o}pital's rule and 
\eqref{eq_lim_inf_step3} by lower bounding $-\log p_i$. 
Now to prove that this is also an upper bound, write 
\begin{equation}
 \sum_{i=1}^\infty p_i^\alpha \geq \sum_{i=1}^n p_i^\alpha
\end{equation}
which is true for all $n$ and all $\alpha>0$. Let 
$\mathcal{P}_n=(p_1,\ldots,p_n)$. ($\mathcal{P}_n$ is not a 
probability distribution but that does not affect the proof.) 
Then for $\alpha>1$ we have
\begin{equation}
 H_\alpha(\mathcal{P}) \leq H_\alpha(\mathcal{P}_n),
\end{equation}
and hence 
\begin{equation}
\label{eq_lim_inf_partial}
 \lim_{\alpha\to\infty} H_\alpha(\mathcal{P}) 
           \leq \lim_{\alpha\to\infty} H_\alpha(\mathcal{P}_n)  
           = -\log\max_{i\in\{1,\ldots,n\}} p_i.
\end{equation}
Since \eqref{eq_lim_inf_partial} holds for all $n$, it follows that 
\begin{equation}
 \lim_{\alpha\to\infty} H_\alpha(\mathcal{P}) \leq -\log\max_i p_i.
\end{equation}
Together with \eqref{eq_lim_inf_step4} this yields 
\begin{equation}
 \lim_{\alpha\to\infty} H_\alpha(\mathcal{P}) = -\log\max_i p_i.
\end{equation}
Therefore, R\'{e}nyi entropy of order $\infty$ is well-defined by 
$H_\infty(\mathcal{P})\stackrel{\triangle}{=}\lim_{\alpha\to\infty} H_\alpha(\mathcal{P})$. 
It can easily be shown that $H_\infty(\mathcal{P})$ is continuous in $\mathcal{P}$, 
thus extending Theorem \ref{th_cont}. 
\section{Conclusions}
We have established some basic properties of R\'{e}nyi entropies $H_\alpha(\mathcal{P})$ 
on the space of probability distributions with countably infinite alphabets. 
In the first place, continuity issues with respect to $\alpha$ and $\mathcal{P}$ 
were addressed. We have shown that these properties are partly similar and partly 
different from the finite alphabet case. The differences are perhaps best summarized 
in Theorem \ref{th_main} which shows that one must be very careful when dealing 
with infinities and limiting operations: Even if $\mathcal{P}_n\rightarrow\mathcal{P}$ 
and $H_\alpha(\mathcal{P}_n)<\infty$ for all $n$ and all $\alpha$, one can have 
$\lim_{n\to\infty}\lim_{\alpha\to{1+}} H_\alpha(\mathcal{P}_n) = 
\lim_{\alpha\to{1+}}\lim_{n\to\infty} H_\alpha(\mathcal{P}_n) + r$, 
for arbitrary $r\in[0,\infty]$. 

\begin{thebibliography}{28}
%
\bibitem{renyi}  A. R\'{e}nyi, On Measures of Entropy and Information, 
                 in \emph{Proc. 4th Berkeley Sympos. Math. Statist. and Prob., 1961}, 
                 vol. I, Berkeley, CA: Univ. California Press, pp. 547--561. 
%
\bibitem{renyi2} A. R\'{e}nyi, On the Foundations of Information Theory, 
                 \emph{Rev. Inst. Int. Stat.}, 1965, vol. 33, no. 1, pp. 1--14. 
%
\bibitem{coding}   L. L. Campbell, A Coding Theorem and R\'{e}nyi's Entropy, 
                   \emph{Inform. Contr.}, 1965, vol. 8, no. 4, pp. 423--429. 
%
\bibitem{coding2}  P. Nath, On a Coding Theorem Connected with R\'{e}nyi's Entropy, 
                   \emph{Inform. Contr.}, 1975, vol. 29, no. 3, pp. 234--242. 
%
\bibitem{error}   M. Ben-Bassat and J. Raviv, R\'{e}nyi's Entropy and the Probability of Error, 
                  \emph{IEEE Trans. Inf. Theory}, 1978, vol. 24, no. 3, pp. 324--331. 
%
\bibitem{csiszar} I. Csisz\'{a}r, Generalized Cutoff Rates and R\'{e}nyi's Information Measures, 
                  \emph{IEEE Trans. Inf. Theory}, 1995, vol. 41, no. 1, pp. 26--34. 
%
\bibitem{stat}   E. K. Lenzi, R. S. Mendes and L. R. da Silva, Statistical Mechanics Based on 
                 Renyi Entropy, \emph{Phys. A}, 2000, vol. 280, no. 3--4, pp. 337--345. 
%
\bibitem{stab}   A. Sumiyoshi, Stability of Tsallis Entropy and Instabilities of R\'{e}nyi and 
                 Normalized Tsallis Entropies: A Basis for q-Exponential Distributions, 
                 \emph{Phys. Rev. E}, 2002, vol. 66, no. 4, pp. 046134. 
%
\bibitem{world}   P. Jizba and T. Arimitsu, The World according to R\'{e}nyi: Thermodynamics of 
                  Multifractal Systems, \emph{Ann. Physics}, 2004, vol. 312, no. 1, pp. 17--59. 
%
\bibitem{erven1}  T. van Erven, \emph{When Data Compression and Statistics Disagree: Two Frequentist 
                  Challenges for the Minimum Description Length Principle}, PhD thesis, Leiden University, 2010. 
%
\bibitem{erven2}  T. van Erven and P. Harremo\"es, R\'enyi Divergence and Majorization, in \emph{Proc. 
                  IEEE Int. Symp. Inform. Theory (ISIT) 2010}, pp. 1335--1339. 
%
\bibitem{rate}   L. Golshani, E. Pasha and G. Yari, Some Properties of R\'{e}nyi Entropy and R\'{e}nyi 
                 Entropy Rate, \emph{Inform. Sci.}, 2009, vol. 179, no. 14, pp. 2426--2433. 
%
\bibitem{rategauss}   L. Golshani and E. Pasha, R\'{e}nyi Entropy Rate for Gaussian Processes, 
                  \emph{Inform. Sci.}, 2009, vol. 180, no. 8, pp. 1486--1491. 
%
\bibitem{ho1}    S.-W. Ho and R. W. Yeung, On the Discontinuity of the Shannon Information Measures, 
                 \emph{IEEE Trans. Inf. Theory}, 2009, vol. 55, no. 12, pp. 5362--5374. 
%
\bibitem{ho2}    S.-W. Ho and R. W. Yeung, The Interplay between Entropy and Variational Distance, 
                 \emph{IEEE Trans. Inf. Theory}, 2010, vol. 56, no. 12, pp. 5906--5929. 
%
\bibitem{ho3}    S.-W. Ho and R.W. Yeung, On Information Divergence Measures and a Unified Typicality, 
                 \emph{IEEE Trans. Inf. Theory}, 2010, vol. 56, no. 12, pp. 5893--5905. 
%
\bibitem{ho4}    S.-W. Ho and S. Verd\'u, On the Interplay between Conditional Entropy and Error Probability, 
                 \emph{IEEE Trans. Inf. Theory}, 2010, vol. 56, no. 12, pp. 5930--5942. 
%
\bibitem{source1}  D. Bontemps, Universal Coding on Infinite Alphabets: Exponentially Decreasing Envelopes, 
                   \emph{IEEE Trans. Inf. Theory}, 2011, vol. 57, no. 3, pp. 1466--1478. 
%
\bibitem{source2}  S. Boucheron, A. Garivier, and E. Gassiat, Coding on Countably Infinite Alphabets, 
                   \emph{IEEE Trans. Inf. Theory}, 2009, vol. 55, no. 1, pp. 358--373. 
%
\bibitem{kendall} D. Kendall, Information Theory and the Limit-Theorem for Markov Chains 
                  and Processes with a Countable Infinity of States, \emph{Ann. Inst. Stat. Math.}, 
                  1963, vol. 15, no. 1, pp. 137--143. 
%
\bibitem{obser}  P. Jizba and T. Arimitsu, Observability of R\'{e}nyi's Entropy, 
                 \emph{Phys. Rev. E}, 2004, vol. 69, no. 1, pp. 026128. 
%
\bibitem{inf}   H. Shimokawa, R\'{e}nyi's Entropy and Error Exponent of Source Coding with 
                Countably Infinite Alphabet, in \emph{Proc. IEEE Int. Symp. Inform. Theory, Seattle, WA, 2006}, pp. 9--14. 
%
\bibitem{aczel}   J. Acz\'{e}l and Z. Dar\'{o}czy, \emph{On Measures of Information and Their 
                  Characterization}, New York: Academic, 1975. 
%
\bibitem{analiza} W. Rudin, \emph{Principles of Mathematical Analysis}, 3rd ed., 
                  International Series in Pure and Applied Mathematics, McGraw-Hill Book Co., 1976. 
%
\bibitem{knopp}  K. Knopp, \emph{Theory and Application of Infinite Series}, Blackie \& Son Ltd.,
                 Glasgow, 1954. 
%
\bibitem{norma}   W. Rudin, \emph{Real and Complex Analysis}, 3rd ed., McGraw-Hill Book Co., 
                  New York, 1987. 
%
\bibitem{wehrl}  A. Wehrl, General Properties of Entropy, \emph{Rev. Modern Phys.}, 1978, vol. 50, no.2, pp. 221--260. 
%
\bibitem{inflimit}  M. S. Alencar and F. M. Assis, The Absolute Minimum and Maximum Value Problem 
                    and the Renyi Entropy of Order $\alpha$, in \emph{Proc. SBT/IEEE Int. Telecomm. Symp. 1998}, pp. 239. 
%
\end{thebibliography}
\end{document}